\begin{document}

\bibliographystyle{apsrev}

\title{On the Complexity of Random Quantum Computations and the Jones Polynomial}

\author{Ryan L. Mann}
\email{mail@ryanmann.org}
\homepage{http://www.ryanmann.org}
\affiliation{Centre for Quantum Software and Information, Faculty of Engineering \& Information Technology, University of Technology Sydney, NSW 2007, Australia}

\author{Michael J. Bremner}
\affiliation{Centre for Quantum Software and Information, Faculty of Engineering \& Information Technology, University of Technology Sydney, NSW 2007, Australia}
\affiliation{Centre for Quantum Computation and Communication Technology, Faculty of Engineering \& Information Technology, University of Technology Sydney, NSW 2007, Australia}

\begin{abstract}
There is a natural relationship between Jones polynomials and quantum computation. We use this relationship to show that the complexity of evaluating relative-error approximations of Jones polynomials can be used to bound the classical complexity of approximately simulating random quantum computations. We prove that random quantum computations cannot be classically simulated up to a constant total variation distance, under the assumption that (1) the Polynomial Hierarchy does not collapse and (2) the average-case complexity of relative-error approximations of the Jones polynomial matches the worst-case complexity over a constant fraction of random links. Our results provide a straightforward relationship between the approximation of Jones polynomials and the complexity of random quantum computations.
\end{abstract}

\maketitle

\section{Introduction}
\label{section:Introduction}

The complexity of quantum computation is completely determined by the complexity of quantum circuit amplitudes. These amplitudes can encode the solution to computationally hard problems, such as Jones polynomials~\cite{aharonov2009polynomial}, Tutte polynomials~\cite{aharonov2007polynomial}, and matrix permanents~\cite{scheel2004permanents, rudolph2009simple}. Unfortunately, quantum mechanics does not provide us with a method for directly measuring these amplitudes or their corresponding probabilities. We must instead infer approximations to them via repeated computations.

There is often a significant difference between the complexity of an exact evaluation of a function and an approximation to it. For example, in the case of the ferromagnetic Ising model, an exact evaluation of its partition function is \mbox{\textsc{\#P}-hard}. However, a relative-error approximation can be achieved with a classical computer in polynomial time~\cite{jerrum1993polynomial}. Another interesting example is the Jones polynomial. Exactly computing the Jones polynomial is \mbox{\textsc{\#P}-hard}~\cite{jaeger1990computational}. However, unlike the ferromagnetic Ising model, the Jones polynomial retains this complexity for relative-error approximations~\cite{kuperberg2009hard}. It is known that, for the same class of Jones polynomials, computing additive-error approximations is \mbox{\textsc{BQP}-hard}~\cite{aharonov2011bqp}. Therefore, it seems unlikely that quantum computers can produce relative-error approximations of Jones polynomials in polynomial time.

We show that the complexity of evaluating relative-error approximations of Jones polynomials can be used to bound the classical complexity of approximately simulating random quantum computations. Under the assumption that (1) the Polynomial Hierarchy \mbox{(\textsc{PH})} does not collapse~\cite{papadimitriou2003computational} and (2) the average-case complexity of relative-error approximations of the Jones polynomial matches the worst-case complexity over a constant fraction of random links \mbox{(Conjecture~\ref{conjecture:AverageCaseComplexityJonesPolynomial})}, we prove that random quantum computations cannot be classically simulated up to a constant total variation distance \mbox{(Theorem~\ref{theorem:ClassicalSimulationRandomBraids})}. This argument follows as a natural extension to those given for Instantaneous Quantum Polynomial-time (IQP) circuits~\cite{bremner2010classical, bremner2016average} and for other classes of random quantum circuits~\cite{boixo2016characterizing}, when combined with results on approximate designs~\cite{harrow2009random, brandao2016local}. Our results provide a straightforward relationship between the approximation of Jones polynomials and the complexity of random quantum computations.

Many quantum circuit classes can be associated with functions that are \mbox{\textsc{\#P}-hard} to evaluate up to a relative error. This feature has been used to construct arguments in favour of a separation between the power of classical and quantum computation (for a review on this topic see Ref.~\cite{lund2017quantum} and Ref.~\cite{harrow2017quantum}). While we do not believe that quantum computers can exactly evaluate such functions, they play a vital role in defining the complexity of sampling from the output probability distribution of quantum circuits. Terhal and DiVincenzo~\cite{terhal2004adptive} first used this feature to bound the capability of classical computers to simulate constant-depth quantum computations. This was later extended to the problem of sampling from linear optical networks~\cite{aaronson2011computational} and IQP circuits~\cite{bremner2010classical}.

Aaronson and Arkhipov~\cite{aaronson2011computational} proved an important relationship between the complexity of approximate sampling and the average-case complexity of relative-error approximations to counting problems. They showed that the complexity of evaluating relative-error approximations to matrix permanents can be used to bound the classical complexity of sampling from random linear optical networks up to a constant total variation distance --- a notion of approximation that is realistic for quantum computation. They conjecture that (1) the average-case complexity of the permanent of Gaussian matrices is \mbox{\textsc{\#P}-hard} and (2) the permanent of Gaussian matrices satisfies a certain anti-concentration bound. Assuming that these conjectures are true, they show that the existence of an efficient classical algorithm which can approximately sample from these networks would imply the collapse of the Polynomial Hierarchy~\cite{aaronson2011computational}. A similar result was proven for IQP circuits~\cite{bremner2016average} --- extending this argument to the quantum circuit model under a different average-case complexity conjecture, where the equivalent anti-concentration conjecture could be proven.

These sampling problems are not just a good candidate for proving a separation between classical and quantum computation, but also for providing experimental benchmarks~\cite{boixo2016characterizing, harrow2017quantum}. This has motivated the study of many other sampling problems. Each of these conjecture the equivalence of the average-case and worst-case complexity of relative-error approximations of a given function. These include: (1) the permanent of Gaussian matrices~\cite{aaronson2011computational}, (2) the gap of degree-three polynomials over $\mathbb{F}_2$~\cite{bremner2016average, miller2017quantum}, (3) output probabilities of conjugated Clifford circuits~\cite{bouland2017quantum}, and (4) complex-temperature Ising model partition functions over dense~\cite{bremner2016average}, sparse~\cite{bremner2017achieving}, and three-dimensional models~\cite{boixo2016characterizing, gao2017quantum, hangleiter2017anti}.

These average-case complexity conjectures are each associated with a class of quantum circuits. These quantum circuits are not thought to be universal for quantum computation, with the exception of the three-dimensional Ising model case, but nonetheless become universal under post-selection. Understanding the distinctions between these conjectures is essential for understanding the relationship between these classes of quantum circuits. However, resolving such conjectures would require non-relativising techniques~\cite{aaronson2016complexity}. We therefore expect this to be a hard open problem.

We consider the problem of sampling from random quantum computations that are distributed according to an approximate unitary \mbox{$(t\geq2)$-design}. We observe that these approximate unitary designs produce output probability distributions that satisfy an anti-concentration bound. This bound is used to prove that if there exists an efficient classical algorithm which can sample from these distributions up to a constant total variation distance, then Stockmeyer's Counting Theorem \mbox{(Theorem~\ref{theorem:StockmeyerCountingTheorem})} can be used to produce relative-error approximations to a constant fraction of their output probabilities \mbox{(Theorem~\ref{theorem:ClassicalSimulationRandomQuantumComputations})}. This same observation has been used to establish arguments for the complexity of random quantum circuits~\cite{boixo2016characterizing, hangleiter2017anti} and conjugated Clifford circuits~\cite{bouland2017quantum}.

We define a natural model of random links via the braid group. A random braid is generated by applying generators of the braid group uniformly at random. A random link is then the plat closure of a random braid. We show that the output probability amplitudes of random quantum computations are proportional to the Jones polynomial of a random link. Furthermore, we show that in the $k^{\mathrm{th}}$ path model representation with $k=5$ or $k\geq7$, random braids on $2n$ strands of length \mbox{$\Omega[n(n+\log(1/\epsilon))]$} form an \mbox{$\epsilon$-approximate} unitary \mbox{$2$-design} \mbox{(Corollary~\ref{corollary:RandomBraidsDesign})}. This leads us to conjecture that it is \mbox{\textsc{\#P}-hard} to approximate the Jones polynomial, up to a relative error, on at least a constant fraction of random links \mbox{(Conjecture~\ref{conjecture:AverageCaseComplexityJonesPolynomial})}. This provides a natural conjecture for bounding the classical complexity of simulating random quantum computations.

This paper is structured as follows. In \mbox{Section~\ref{section:RandomQuantumComputations}}, we provide an introduction to random quantum computations and approximate unitary designs. We then state our result on the classical simulation of random quantum computations. In \mbox{Section~\ref{section:KnotsBraidsAndTheJonesPolynomial}}, we briefly introduce the theory of knots, braids, and the Jones polynomial. We review the relationship between Jones polynomials and quantum computing in \mbox{Section~\ref{section:JonesPolynomialAndQuantumComputing}}. In \mbox{Section~\ref{section:RandomQuantumComputationsAndRandomLinks}}, we relate the complexity of random quantum computations to the complexity of approximating the Jones polynomial of random links. Finally, we conclude in \mbox{Section~\ref{section:ConclusionAndOutlook}} with some remarks and open problems.

\section{Random Quantum Computations}
\label{section:RandomQuantumComputations}

A \emph{random quantum computation} is the action of (1) preparing an initial state, (2) applying a randomly chosen unitary matrix, and (3) measuring in the computational basis. This is equivalent to sampling from a probability distribution $\mathcal{D}_U$, where $U$ is a randomly chosen unitary matrix.
\begin{definition}[$\mathcal{D}_U$]
    For a $d \times d$ unitary matrix $U$, we define $\mathcal{D}_U$ to be the probability distribution over integers $x\in[d]$, given by
    \begin{align}
        \textbf{Pr}[x] := \abs{\bra{x}U\ket{0}}^2. \notag
    \end{align}
\end{definition}
It is natural to consider unitary matrices drawn from the uniform distribution. The uniform distribution over the unitary group $\textrm{U}(d)$ is defined by the \emph{Haar measure}, which is the unique translation-invariant measure on the group. Unfortunately, random unitary matrices drawn from the Haar measure cannot be implemented efficiently by a quantum computer as they typically require an exponential number of gates~\cite{knill1995approximation}.

For our purposes, it is important that the random quantum computations can be implemented efficiently. We achieve this by weakening the requirement that the unitary matrices are drawn from the Haar measure. Instead, we require only that the unitary matrices are drawn from a distribution that is close to the Haar measure.

A \emph{unitary \mbox{$t$-design}} is a distribution over a finite set of unitary matrices which imitates the properties of the Haar measure up to the $t^{\mathrm{th}}$ moment. For convenience, let $\mathrm{Hom}_{(t,t)}(\textrm{U}(d))$ be the set of polynomials homogeneous of degree $t$ in the matrix elements of $U$ and homogeneous of degree $t$ in the matrix elements of $U^*$.
\begin{definition}[Unitary \mbox{$t$-design}~\cite{roy2009unitary}]
A distribution \mbox{$\mathcal{D}=\{p_i, U_i\}$} over unitary matrices in dimension $d$ is a unitary \mbox{$t$-design} if, for any polynomial \mbox{$f \in \mathrm{Hom}_{(t,t)}(\textrm{U}(d))$},
\begin{align}
    \sum_{U_i \in \mathcal{D}}p_if(U_i) = \int\limits_{\mathclap{\textrm{U}(d)}}f(U)dU. \notag
\end{align}
\end{definition}

\begin{definition}[\mbox{$\epsilon$-approximate} unitary \mbox{$t$-design}]
A distribution \mbox{$\mathcal{D}=\{p_i, U_i\}$} over unitary matrices in dimension $d$ is an \mbox{$\epsilon$-approximate} unitary \mbox{$t$-design} if, for any polynomial \mbox{$f\in\mathrm{Hom}_{(t,t)}(\textrm{U}(d))$},
\begin{align}
    (1-\epsilon)\int\limits_{\mathclap{\textrm{U}(d)}}f(U)dU \leq \sum_{U_i \in \mathcal{D}}p_if(U_i) \leq (1+\epsilon)\int\limits_{\mathclap{\textrm{U}(d)}}f(U)dU. \notag
\end{align}
\end{definition}

Brandao, Harrow, and Horodecki~\cite{brandao2016local} showed that \mbox{$G$-local} random quantum circuits acting on $n$ qudits composed of polynomially many gates form an approximate unitary \mbox{$\mathrm{poly}(n)$-design}. Here, \mbox{$G=\{g_i\}_{i=1}^{m}$} is a universal set of gates containing inverses with each \mbox{$g_i\in\textrm{U}(d^2)$} composed of algebraic entries.
\begin{definition}[\mbox{$G$-local} random quantum circuit]
    At each time step, two indices, $i$ and $j$, are chosen uniformly at random from $[m]$ and $[n-1]$, respectively. The gate $g_i$ is then applied to the two neighbouring qudits $j$ and $j+1$.
\end{definition}
\begin{theorem}[Brandao, Harrow, and Horodecki~\cite{brandao2016local}]
    \label{theorem:RandomQuantumCircuitsPolynomialDesigns}
    Fix $d\geq2$. Let \mbox{$G=\{g_i\}_{i=1}^{m}$} be a universal set gates containing inverses with each \mbox{$g_i\in\emph{\textrm{U}}(d^2)$} composed of algebraic entries. There exists a constant \mbox{$\lambda=\lambda(G)>0$} such that \mbox{$G$-local} random quantum circuits of length
    \begin{align}
        \lambda n\left\lceil\log_{d}(4t)\right\rceil^2t^5t^{3.1/\log(d)}\left[nt\log\left(d\right)+\log(1/\epsilon)\right] \notag
    \end{align}
    form an \mbox{$\epsilon$-approximate} unitary \mbox{$t$-design}.
\end{theorem}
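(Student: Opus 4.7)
The plan is to control the moment operator of the random circuit ensemble and convert a spectral-gap estimate into the approximate design property. Define the step-$t$ moment operator
\begin{align}
    M_{t,n} := \mathbb{E}_{U}\bigl[U^{\otimes t}\otimes (U^{*})^{\otimes t}\bigr], \notag
\end{align}
where $U$ is one step of the $G$-local random quantum circuit, and let $M_{t,H}$ be the analogous operator under the Haar measure on $\mathrm{U}(d^n)$. A standard computation shows that $M_{t,H}$ is the projector onto $\mathrm{Hom}_{(t,t)}(\mathrm{U}(d^n))^\perp$-orthogonal complement, and that a bound of the form $\lVert M_{t,n}^{L}-M_{t,H}\rVert_{\infty}\leq \epsilon\,d^{-2nt}$ implies the $\epsilon$-approximate unitary $t$-design property. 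Thus the problem reduces to lower bounding the spectral gap $\Delta(M_{t,n}):=1-\lVert M_{t,n}-M_{t,H}\rVert_{\infty}$ and then iterating.

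The first step I would carry out is to bound the gap of a single two-site moment operator. Writing $M_{t,n}=\tfrac{1}{n-1}\sum_{j=1}^{n-1}M_{t,n}^{(j)}$, where $M_{t,n}^{(j)}$ is the moment operator for a Haar-like draw from $G$ acting on sites $j,j+1$, the essential ingredient is a bound $\Delta(M_{t,2})\geq h(d,t)$ for some explicit $h$. Because $G$ is universal and contains inverses, it generates a dense subgroup of $\mathrm{U}(d^2)$, so one can compare $M_{t,G}$ with the Haar moment operator on two sites. The algebraic-entries hypothesis is crucial here: it allows a Solovay--Kitaev / Bourgain--Gamburd style number-theoretic argument, or alternatively the quantitative bound of Varj\'u, to produce the $\lceil\log_d(4t)\rceil^{-2}t^{-5}t^{-3.1/\log d}$ dependence appearing in the theorem statement.

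Next I would promote the two-site gap to a gap for the full circuit. This is the technical heart of the argument. The key tool is the detectability lemma of Aharonov--Arad--Landau--Vazirani, applied to the frustration-free Hamiltonian $H=\sum_{j}(I-M_{t,n}^{(j)}\text{ projected onto its top eigenspace})$. Combining the detectability lemma with a quantum union bound yields $\Delta(M_{t,n})\geq \Omega\bigl(\Delta(M_{t,2})/n\bigr)$, up to logarithmic factors. The tricky subtlety is that $M_{t,n}^{(j)}$ is not strictly a projector, so one must first approximate it by its high-eigenvalue projector and carefully bound the error incurred; this is where Brand\~ao, Harrow, and Horodecki's analysis introduces additional polynomial overhead in $t$.

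Finally, I would convert the gap bound into a circuit length. Iterating $L$ steps gives $\lVert M_{t,n}^{L}-M_{t,H}\rVert_{\infty}\leq (1-\Delta(M_{t,n}))^{L}\leq \exp(-L\Delta(M_{t,n}))$, and demanding this be at most $\epsilon d^{-2nt}$ yields the quoted length $\lambda n\lceil\log_{d}(4t)\rceil^{2}t^{5}t^{3.1/\log d}[nt\log d+\log(1/\epsilon)]$ after substituting the local gap estimate. The main obstacle is unquestionably the detectability-lemma step: converting a positive gap for a single two-site moment operator into a quantitatively controlled gap for the sum over all neighbouring pairs, while keeping the $t$-dependence tame, is what makes the Brand\~ao--Harrow--Horodecki proof substantially harder than the analogous classical Markov-chain mixing arguments. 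The universality-to-quantitative-gap step is also delicate, but it is a cleaner application of known harmonic analysis on compact Lie groups.
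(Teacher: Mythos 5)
Note first that the paper you are working from does not prove this statement at all: it is imported verbatim as a black box from Brand\~ao, Harrow, and Horodecki (Ref.~\cite{brandao2016local}), so the only meaningful comparison is with that original proof. Your outline does capture its overall architecture --- express the design condition through the $t$-th moment operator, lower bound a spectral gap, lift a local gap to the whole chain, and iterate until the deviation from the Haar moment operator is below roughly $\epsilon d^{-2nt}$ --- and in that sense you have reconstructed the right skeleton rather than a genuinely different route.

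However, two of your attributions are wrong in ways that matter for obtaining the stated circuit length. First, the factors $\lceil\log_d(4t)\rceil^2 t^5 t^{3.1/\log d}$ do \emph{not} come from the universality/algebraic-entries step. The Bourgain--Gamburd-type results invoked for a universal gate set with algebraic entries give a spectral gap for the gate-set measure on $\mathrm{U}(d^2)$ that is \emph{independent of} $t$; this is precisely why the gate set enters the theorem only through the constant $\lambda(G)$, while all of the $t$-dependence is produced by the analysis of the circuit with Haar-random two-qudit gates. Second, the local-to-global step is not a direct two-site-to-$n$-site detectability-lemma argument yielding $\Delta_n \geq \Omega(\Delta_2/n)$ with $t$-uniform constants; no such reduction is known. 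The permutation-operator (approximate orthogonality) estimate that controls the base case requires the local dimension to be large compared with $t$, so Brand\~ao--Harrow--Horodecki first block $\lceil\log_d(4t)\rceil$ qudits together --- this blocking is the origin of the $\lceil\log_d(4t)\rceil^2$ factor --- and then use Nachtergaele's martingale method to reduce the $n$-qudit gap to the gap on $O(\log_d t)$ qudits, with the detectability lemma and the approximate orthogonality of permutation operators handling that base case. As written, your sketch would not reproduce the quantitative dependence on $t$ and $n$ in the theorem; to repair it you would need to insert the blocking step, replace the two-site reduction by the martingale-method reduction, and move the Bourgain--Gamburd input to the final comparison between the Haar-local and $G$-local walks.
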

We shall, therefore, restrict our attention to random quantum computations where the unitary matrices are drawn from an \mbox{$\epsilon$-approximate} unitary \mbox{$(t\geq2)$-design}. We are interested in a classical simulation of random quantum computations, for which we have the following result:
\begin{restatable}{theorem}{ClassicalSimulationRandomQuantumComputations}
    \label{theorem:ClassicalSimulationRandomQuantumComputations}
    Let $U$ be a $d \times d$ unitary matrix distributed according to an \mbox{$\epsilon$-approximate} unitary \mbox{$(t\geq2)$-design} and let $\mathcal{D}_U$ be its corresponding probability distribution. Suppose that there is a classical polynomial-time algorithm $C$, which, for any $U$, samples from a probability distribution $\mathcal{D}^\prime$, such that \mbox{$\norm{\mathcal{D}^\prime-\mathcal{D}_U}_1 \leq \mu$}. Then, for any $\gamma$ such that \mbox{$0 < \gamma < 1-\epsilon$}, there is an $\emph{\textsc{FBPP}}^{\emph{\textsc{NP}}^C}$ algorithm which approximates $\abs{\bra{0}U\ket{0}}^2$ up to a relative error \mbox{$\frac{4\mu(1+\epsilon)^2}{\gamma(1-\epsilon-\gamma)^2}+o(1)$} on at least a $\frac{(1-\epsilon-\gamma)^2}{4(1+\epsilon)}$ fraction of matrices.
\end{restatable}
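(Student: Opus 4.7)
The plan is to follow the Stockmeyer-based template of Aaronson--Arkhipov~\cite{aaronson2011computational} and Bremner--Montanaro--Shepherd~\cite{bremner2016average}, combining three ingredients: anti-concentration of $\abs{\bra{0}U\ket{0}}^2$ derived from the design property, Stockmeyer's Counting Theorem applied to the sampler $C$, and a hiding/averaging step that converts the total variation bound into a small additive error at the single outcome $x=0$.

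First, I would establish anti-concentration. Since $\abs{\bra{0}U\ket{0}}^2 \in \mathrm{Hom}_{(1,1)}(\textrm{U}(d))$ and $\abs{\bra{0}U\ket{0}}^4 \in \mathrm{Hom}_{(2,2)}(\textrm{U}(d))$, the $\epsilon$-approximate $(t\geq2)$-design property forces $\mathbb{E}_U[\abs{\bra{0}U\ket{0}}^2]$ and $\mathbb{E}_U[\abs{\bra{0}U\ket{0}}^4]$ to lie within a factor of $(1\pm\epsilon)$ of their Haar values $1/d$ and $2/[d(d+1)]$. Applying the Paley--Zygmund inequality $\Pr[X\geq s] \geq (\mathbb{E}[X]-s)^2/\mathbb{E}[X^2]$ with $s=\gamma/d$ yields
\begin{align}
\Pr_U\!\left[\abs{\bra{0}U\ket{0}}^2 \geq \gamma/d\right] \geq \frac{(1-\epsilon-\gamma)^2}{2(1+\epsilon)} \notag
\end{align}
for any $0<\gamma<1-\epsilon$, giving us the anti-concentration event.

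Next I would apply Stockmeyer's Counting Theorem (Theorem~\ref{theorem:StockmeyerCountingTheorem}) to the sampler $C$ to produce an $\mathrm{FBPP}^{\mathrm{NP}^C}$ estimate $\tilde{p}$ of $p'_0(U) := \Pr_{\mathcal{D}'}[0]$ with multiplicative error $1/\mathrm{poly}(n)$, which contributes only an $o(1)$ term to the relative error. To compare $p'_0(U)$ with $p_0(U) := \abs{\bra{0}U\ket{0}}^2$, I would use the $\ell_1$ assumption $\sum_x \abs{p'_x(U)-p_x(U)}\leq\mu$, so averaging over the outcome index gives $\mathbb{E}_x\abs{p'_x(U)-p_x(U)}\leq\mu/d$. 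A hiding step, based on the invariance of the design under left multiplication by permutation matrices, lifts this $x$-averaged bound to a bound at the specific outcome $x=0$, up to the $(1\pm\epsilon)$ design slack. Tuning the Markov threshold to $\tau=\frac{4(1+\epsilon)}{(1-\epsilon-\gamma)^2}$ balances the losses from Markov and from anti-concentration so that, after a union bound, both $p_0(U)\geq\gamma/d$ and $\abs{p'_0(U)-p_0(U)}\leq\frac{4(1+\epsilon)^2\mu}{d(1-\epsilon-\gamma)^2}$ hold on at least a $\frac{(1-\epsilon-\gamma)^2}{4(1+\epsilon)}$ fraction of matrices; dividing the additive error by the anti-concentration lower bound produces the claimed relative error $\frac{4\mu(1+\epsilon)^2}{\gamma(1-\epsilon-\gamma)^2}+o(1)$.

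The main obstacle is the hiding step. For an exact unitary design, left invariance of the distribution immediately equates the estimation of $p_0$ with the estimation of $p_x$ at a uniformly random $x$, so the $\ell_1$-averaged bound transfers directly to $x=0$; for an $\epsilon$-approximate $2$-design, however, this invariance only holds at the level of the second moments, and the transfer must be argued through moment computations rather than a change of variables on the full distribution. The careful propagation of the $(1\pm\epsilon)$ factors through both the Paley--Zygmund step and the hiding-plus-Markov step is what produces the precise $(1+\epsilon)^2/(1-\epsilon-\gamma)^2$ combination appearing in the statement, and is the delicate part of the argument.
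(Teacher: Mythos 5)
Your architecture coincides with the paper's: the anti-concentration bound you derive from the first and second design moments via Paley--Zygmund is exactly Lemma~\ref{lemma:ApproximateDesignAnticoncentration} (with the same constant $\frac{(1-\epsilon-\gamma)^2}{2(1+\epsilon)}$), the Stockmeyer step is Theorem~\ref{theorem:StockmeyerCountingTheorem} applied to the probability that $C$ outputs $0$, and your tuning of the Markov threshold followed by a union bound reproduces the paper's choice $\delta=\frac{(1-\epsilon-\gamma)^2}{4(1+\epsilon)}$ and hence the claimed relative error and fraction.

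The divergence, and the genuine gap, is the step you yourself flag: transferring the $\ell_1$ bound to the single outcome $x=0$. Writing $p_x(U)=\abs{\bra{x}U\ket{0}}^2$ and $q_x(U)$ for the sampler's output probabilities, your route is a hiding argument: average $\sum_x\abs{p_x(U)-q_x(U)}\le\mu$ over $x$, then use invariance of the design under $U\mapsto P_xU$ for permutations $P_x$ to move the averaged bound to $x=0$. As you concede, an $\epsilon$-approximate $2$-design need not be invariant as a distribution under such left multiplication (only its low moments are near-Haar), and you do not supply the promised ``moment computation''; worse, the quantity that must be controlled, $\mathbb{E}_U\abs{p_0(U)-q_0(U)}$, involves the classical sampler's error, which is not a degree-$(t,t)$ polynomial in the entries of $U$, so the design condition does not obviously constrain it at all. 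A sampler that concentrates its entire $\mu$ budget on the outcome $0$ for every input shows why some averaging or hiding mechanism is truly needed, and your proposal postpones rather than completes it. For comparison, the paper does not hide over outcomes: its Lemma~\ref{lemma:ClassicalAdditiveApproximation} asserts $\textbf{Pr}_U\left[\abs{Q_U-T_U}\ge\frac{\mu(1+\epsilon)}{\delta d}\right]\le\delta$ directly from Markov's inequality over the random choice of $U$ together with the approximate design condition (Lemma~\ref{lemma:ApproximateDesignCondition}), i.e., the averaging is over the random unitary itself rather than over a hidden outcome index. That step in the paper is itself stated tersely and faces the same difficulty you identified, but the point for your write-up is that the key transfer step is asserted, not proved, in your proposal, so the argument is incomplete precisely where the delicacy lies.
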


We prove \mbox{Theorem~\ref{theorem:ClassicalSimulationRandomQuantumComputations}} and several supporting lemmas in \mbox{Appendix~\ref{section:ProofOfTheoremCSRQS}}. \mbox{Theorem~\ref{theorem:ClassicalSimulationRandomQuantumComputations}} tells us that, if there exists an efficient classical algorithm which can approximately sample from any random quantum computation, then, there is an $\textsc{FBPP}^\textsc{NP}$ algorithm which can approximate \mbox{$\abs{\bra{0}U\ket{0}}^2$} up to a relative error for a fraction of matrices $U$. Suppose that this algorithm solves a \mbox{\textsc{\#P}-hard} problem, then, by Toda's Theorem~\cite{toda1991pp}, the Polynomial Hierarchy collapses to its third level.

\begin{theorem}[Toda~\cite{toda1991pp}]
    \label{theorem:Toda}
    \begin{align}
        {\normalfont\textsc{PH}} \subseteq {\normalfont\textsc{P}}^{\normalfont\textsc{\#P}}. \notag
    \end{align}
\end{theorem}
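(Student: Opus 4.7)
The plan is to establish the theorem by introducing the intermediate class $\oplus\mathrm{P}$ (``parity-$\mathrm{P}$''), whose languages are those $L$ for which there exists a polynomial-time nondeterministic Turing machine $M$ with $x \in L$ iff the number of accepting paths of $M(x)$ is odd, and then proving the chain $\mathrm{PH} \subseteq \mathrm{BP}\cdot\oplus\mathrm{P} \subseteq \mathrm{P}^{\#\mathrm{P}}$, where $\mathrm{BP}\cdot\oplus\mathrm{P}$ denotes the class of languages decided by a polynomial-time randomised algorithm whose acceptance is determined, with bounded error, by a $\oplus\mathrm{P}$ predicate applied to the input and the random tape.

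For the first containment, I would begin with the base case $\mathrm{NP} \subseteq \mathrm{BP}\cdot\oplus\mathrm{P}$ via the Valiant--Vazirani isolation lemma: a random pairwise-independent hash on the assignments of a SAT formula $\phi$ produces a formula that, with inverse-polynomial probability, has exactly one satisfying assignment when $\phi$ is satisfiable and none otherwise, so counting the survivors modulo two decides $\mathrm{SAT}$ with a noticeable gap (amplified to a constant by parallel repetition). Extension to all of $\mathrm{PH}$ follows by induction on the alternation level, using Toda's composition lemma for $\oplus\mathrm{P}$ machines: since the parity of accepting paths of a suitable product construction equals the product modulo $2$ of the individual parities, iterated composition yields a single $\oplus\mathrm{P}$ predicate whose one-sided error is driven exponentially small, which suffices to absorb both existential and universal quantifier layers.

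For the second containment, I would derandomise the $\mathrm{BP}\cdot\oplus\mathrm{P}$ computation using a single $\#\mathrm{P}$ query via Toda's polynomial trick: for any $\oplus\mathrm{P}$ language $L$ and polynomial $q$, iterating a suitable low-degree polynomial map constructs a $\#\mathrm{P}$ function $f$ with $f(x) \equiv -1 \pmod{2^{q(|x|)}}$ when $x \in L$ and $f(x) \equiv 0 \pmod{2^{q(|x|)}}$ otherwise, so composing with the randomised reduction of stage one allows the acceptance count over random tapes to be extracted as a single $\#\mathrm{P}$ value and then thresholded in deterministic polynomial time. The main obstacle is the quantifier-absorption step of stage one: closing $\mathrm{BP}\cdot\oplus\mathrm{P}$ under universal quantifiers with error small enough for the induction to terminate requires the delicate iteration of $\oplus\mathrm{P}$ compositions rather than naive probability amplification, whereas the polynomial trick in stage two is comparatively mechanical once the right amplifier is identified.
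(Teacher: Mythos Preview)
The paper does not supply its own proof of this statement: Toda's theorem is quoted as a known result with a citation to Toda's original paper and is used as a black box in the argument for Theorem~\ref{theorem:ClassicalSimulationRandomBraids}. There is therefore nothing in the paper to compare your proposal against.

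That said, your sketch is a faithful outline of the standard proof. The two-stage decomposition $\textsc{PH} \subseteq \mathrm{BP}\cdot\oplus\textsc{P} \subseteq \textsc{P}^{\#\textsc{P}}$, the use of Valiant--Vazirani isolation for the base case, the inductive absorption of quantifiers via the multiplicativity of parity under composition, and the modular amplification (``polynomial trick'') to collapse the randomised parity computation into a single $\#\textsc{P}$ query are exactly the ingredients of Toda's original argument. Your identification of the quantifier-absorption step as the delicate point is also accurate: driving the one-sided error below $2^{-\mathrm{poly}}$ so that a union bound over exponentially many branches succeeds is where the iterated $\oplus\textsc{P}$ self-composition does the real work. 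For the purposes of this paper, however, a citation suffices.
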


In \mbox{Section~\ref{section:RandomQuantumComputationsAndRandomLinks}}, we show that \mbox{$\abs{\bra{0}U\ket{0}}^2$} is proportional to the Jones polynomial of a random link, which is known to be \mbox{\textsc{\#P}-hard} to approximate up to a relative error in the worst case~\cite{kuperberg2009hard}. We conjecture that this remains true in the average case.

\section{Knots, Braids, and the Jones Polynomial}
\label{section:KnotsBraidsAndTheJonesPolynomial}

We now briefly introduce the theory of knots, braids, and the Jones polynomial.

\begin{definition}[Knot]
A knot $K$ is subset of points in $\mathbb{R}^3$ that is homeomorphic to a circle.
\end{definition}

Informally, a knot is a tangled strand of string with the open ends closed to form a loop. Much like the everyday knots that we use when we tie our shoelaces, ties, and so on --- mathematical knots are exactly that, except that the open ends are fused together.

The most simple knot you can think of is the \emph{unknot}, also called the \emph{trivial knot}, which is a closed loop without a knot \mbox{(Fig.~\ref{figure:Unknot})}. Other examples of knots include the \emph{trefoil knot} \mbox{(Fig.~\ref{figure:Trefoil})}, and the \emph{figure eight knot} \mbox{(Fig.~\ref{figure:FigureEight})}.

\begin{figure}[ht]
    \centering
    \begin{subfigure}{.30\linewidth}
        \centering
        \includegraphics[width=.8\linewidth]{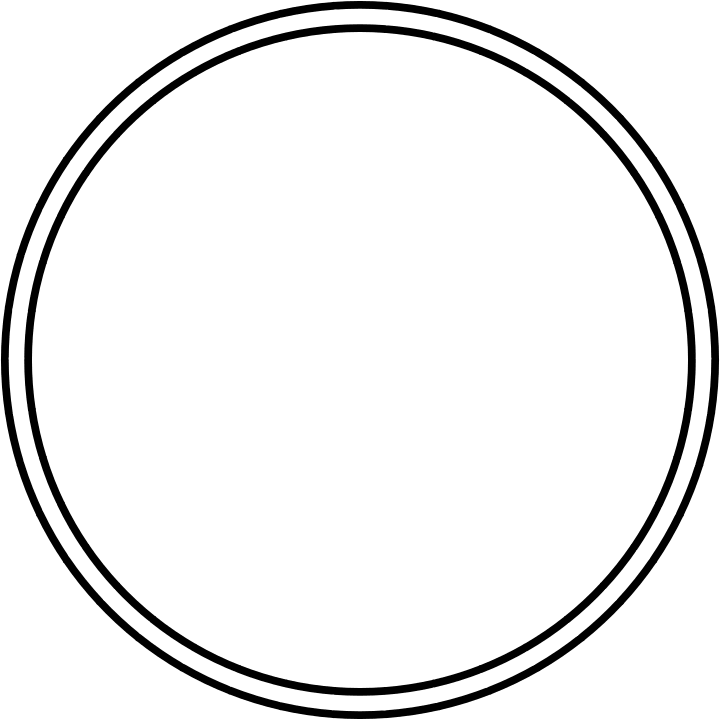}
        \caption{The unknot.}
        \label{figure:Unknot}
    \end{subfigure}
    \begin{subfigure}{.30\linewidth}
        \centering
        \includegraphics[width=.8\linewidth]{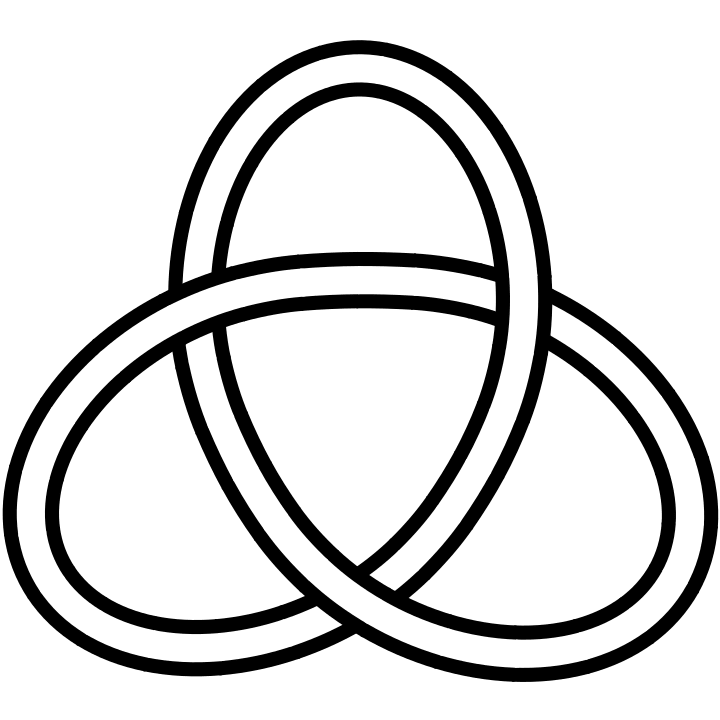}
        \caption{The trefoil knot.}
        \label{figure:Trefoil}
    \end{subfigure}
    \begin{subfigure}{.30\linewidth}
        \centering
        \includegraphics[width=.8\linewidth]{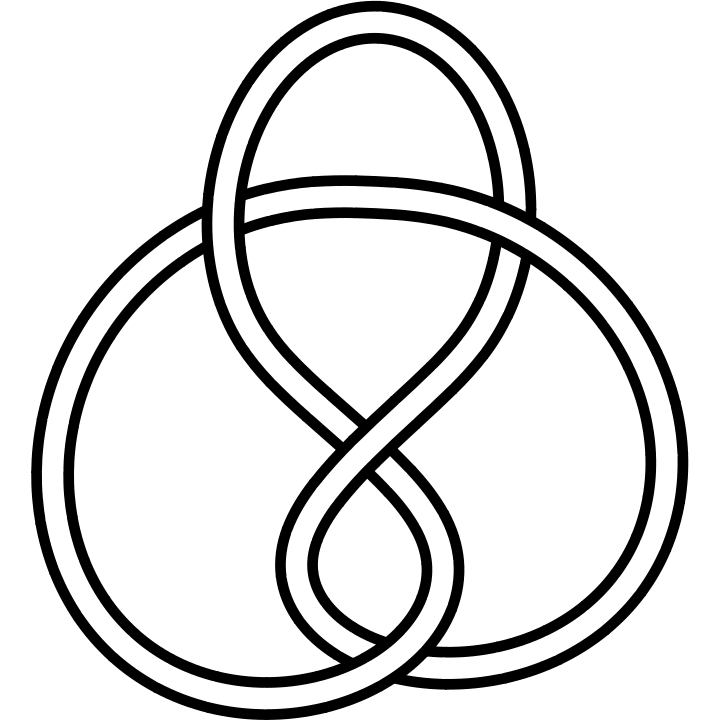}
        \caption{The figure eight knot.}
        \label{figure:FigureEight}
    \end{subfigure}
    \caption{Examples of basic knots.}
\end{figure}

We have seen how a knot is an embedding of a circle in $\mathbb{R}^3$. We can now generalise this idea by considering an embedding of multiple circles in $\mathbb{R}^3$.

\begin{definition}[Link]
A link $L$ is a finite disjoint union of knots $L=\bigcup_iK_i$. Each knot $K_i$ in the union is called a \emph{component} of the link.
\end{definition}

\begin{definition}[Oriented link]
An oriented link is a link in which each component is assigned an orientation.
\end{definition}

We can now see that a knot is a link of only one component. The generalisation of the unknot to a link on $n$ components is called the \emph{unlink}, which is a collection of $n$ unknots that are not interlinked. An example of a slightly more interesting link is the \emph{Borromean rings} link \mbox{(Fig.~\ref{figure:BorromeanRings})}, which has the property that removing any single component of the link gives the two component unlink.

\begin{figure}[ht]
    \centering
    \includegraphics[width=.30\linewidth]{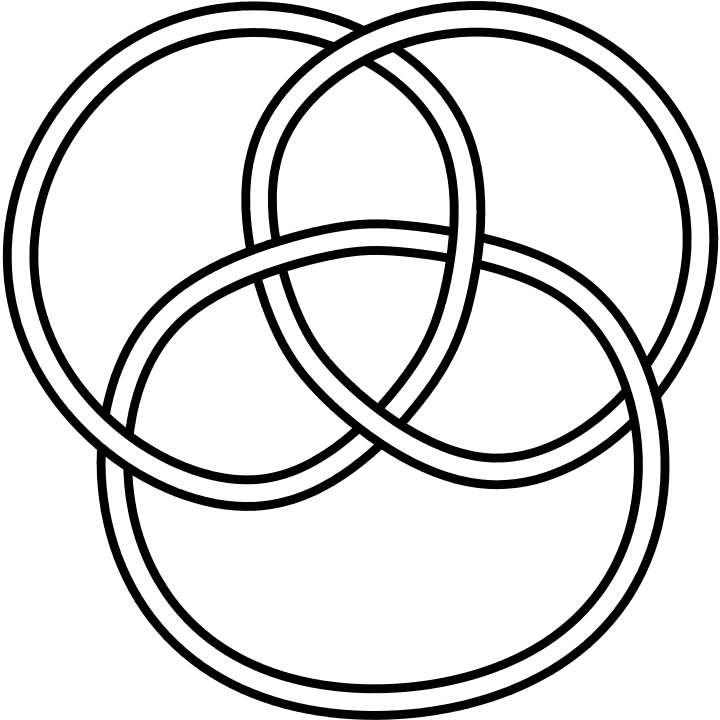}
    \caption{The Borromean rings link.}
    \label{figure:BorromeanRings}
\end{figure}

A important problem in knot theory is the \emph{link recognition problem} --- given two links are they the same? To answer this, we must first ask, what does it mean for two links to be the same?

\begin{definition}[Link equivalence]
Two links $L_1$ and $L_2$ are said to be equivalent if there exists a orientation-preserving homeomorphism \mbox{$f:\mathbb{R}^3\to\mathbb{R}^3$} so that \mbox{$f(L_1)=L_2$}.
\end{definition}

Essentially, two links are equivalent if they can be deformed into one another. We can prove that two links are equivalent by producing a set of instructions that will deform one link into the other. However, proving that two links are not equivalent is much more difficult, as we would need to prove that no set of instructions exist.

Link invariants are an important concept in knot theory as they allow us to study the link recognition problem.

\begin{definition}[Link invariant]
A link invariant is a function from the set of links to some other set, such that the output of the function depends only on the equivalence class of the link.
\end{definition}

\begin{definition}[Jones polynomial~\cite{jones1985polynomial}]
The Jones polynomial $V_L(\omega)$ is a link invariant, which assigns to each oriented link a Laurent polynomial in the variable $\omega^{1/2}$.
\end{definition}

The Jones polynomial is characterised by the \emph{skein relation} and the normalisation that the Jones polynomial of the unknot \mbox{$V_\bigcirc(\omega)=1$}.

\begin{definition}[Skein relation]
Given three links $L_-$, $L_0$, and $L_+$ that are identical, except for a local region where they differ according to \mbox{Fig.~\ref{figure:SkeinRelation}}, then the following skein relation holds
\begin{align}
    (\omega^{1/2}-\omega^{-1/2})V_{L_0}(\omega) = \omega^{-1}V_{L_+}(\omega)-\omega V_{L_-}(\omega). \notag
\end{align}
\end{definition}

\begin{figure}[ht]
    \centering
    \includegraphics[width=.70\linewidth]{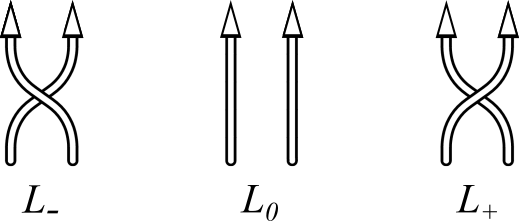}
    \caption{Diagrams for the skein relation.}
    \label{figure:SkeinRelation}
\end{figure}

The skein relation is sufficient for a recursive computation of the Jones polynomial of a link. It follows that the Jones polynomial of a link can be computed in time exponential in the number of crossings. A classic result of Jaeger, Vertigan, and Welsh~\cite{jaeger1990computational} states that exactly computing the Jones polynomial $V_L(\omega)$ of a link is \mbox{\textsc{\#P}-hard} except when $\omega$ is one of a few special points. Bordewich et al.~\cite{bordewich2005approximate} showed that it is \mbox{\textsc{BQP}-hard} to approximate the Jones polynomial up to an additive error. Kuperberg~\cite{kuperberg2009hard} proved that it remains \mbox{\textsc{\#P}-hard} to approximate the Jones polynomial up to a relative error.

\begin{theorem}[Jaeger, Vertigan, and Welsh~\cite{jaeger1990computational}]
    Evaluating the Jones polynomial $V_L(\omega)$ of a link is \mbox{\emph{\textsc{\#P}-hard}} except when \mbox{$\omega=\pm\exp(2\pi i/k)$} with \mbox{$k\in\{1,2,3,4,6\}$} when it can be evaluated in polynomial time.
\end{theorem}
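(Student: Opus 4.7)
The plan is to reduce to the \textsc{\#P}-hardness of the Tutte polynomial, exploiting the classical identification of the Jones polynomial with a specialisation of the (signed) Tutte polynomial of an associated planar graph. First I would take an alternating link diagram $L$ and construct the signed planar graph $G(L)$ obtained by the checkerboard colouring of the projection, recording the sign of each edge from the crossing type. A standard computation with the Kauffman bracket yields an identity of the form $V_L(\omega) = (-\omega^{-3/4})^{w(L)}\, T(G(L); -\omega, -\omega^{-1})$ up to an explicit monomial prefactor depending on the writhe $w(L)$, with a signed-Tutte analogue handling the non-alternating case. Since every planar graph arises as $G(L)$ for some alternating diagram (after a suitable choice of signs), an oracle for $V_L$ at a fixed $\omega$ effectively evaluates the planar Tutte polynomial at $(x, y) = (-\omega, -\omega^{-1})$.

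Next I would invoke the classification of the complexity of the planar Tutte polynomial due to Vertigan: $T(G; x, y)$ on planar $G$ is \textsc{\#P}-hard to evaluate except on an explicit exceptional set consisting of a finite list of special points together with the hyperbolas $(x-1)(y-1) = 1$ and $(x-1)(y-1) = 2$, the latter being easy for planar graphs by Kasteleyn's theorem for the Ising model. Translating the substitution $(x, y) = (-\omega, -\omega^{-1})$ through this exceptional set amounts to tracking where the parameter $(x-1)(y-1) = 2 + \omega + \omega^{-1}$ lands on an easy value, and a short calculation identifies those $\omega$ as exactly $\omega = \pm\exp(2\pi i/k)$ for $k \in \{1, 2, 3, 4, 6\}$. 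At every other $\omega$ the reduction exhibits $V_L(\omega)$ as \textsc{\#P}-hard.

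At each of the exceptional $\omega$ I would exhibit a polynomial-time evaluation by appealing to classical formulas: $V_L(1) = (-2)^{c(L)-1}$ where $c(L)$ is the number of components; $V_L(-1)$ equals, up to sign, the determinant of a Goeritz matrix of $L$ and is computable by Gaussian elimination; $V_L(i)$ admits a known expression involving the Arf invariant; and the remaining cases at the third- and sixth-root-of-unity points reduce similarly to ranks and orders of low-dimensional homology groups of branched double covers, each polynomial-time computable.

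The main obstacle is not the reduction itself, which is essentially a formal manipulation of the Kauffman bracket, but the underlying classification of easy and hard points for the planar Tutte polynomial. Establishing \textsc{\#P}-hardness along the one-parameter family $(x, y) = (-\omega, -\omega^{-1})$ demands delicate polynomial interpolation arguments together with gadget graphs whose Tutte specialisations realise the required multiplicative structure while remaining planar; pushing these gadgets through every non-exceptional $\omega$, and verifying that no spurious easy points emerge from the link-theoretic restriction, is the technically deepest ingredient of any self-contained proof.
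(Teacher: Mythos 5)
This statement is not proved in the paper at all --- it is quoted as a known result with a citation to Jaeger, Vertigan, and Welsh, so there is no internal proof to compare against. Your outline is, in essence, a faithful reconstruction of the original JVW argument: pass from a link diagram to the signed checkerboard graph, use the Kauffman bracket/Thistlethwaite identity to identify $V_L(\omega)$ with a Tutte-polynomial specialisation at $(x,y)=(-\omega,-\omega^{-1})$ (so that $(x-1)(y-1)=2+\omega+\omega^{-1}$, and the curve is $xy=1$), use the medial construction to realise every planar graph as the checkerboard graph of an alternating diagram, and then read off hardness from the complexity classification of the planar Tutte polynomial; your translation of the exceptional set correctly recovers exactly $\omega=\pm\exp(2\pi i/k)$, $k\in\{1,2,3,4,6\}$ (the points $\pm1$, $\pm i$, and the primitive third and sixth roots of unity), and the easy-point evaluations you cite (component count at $\omega=1$, Goeritz/Alexander determinant at $\omega=-1$, Arf invariant at $\omega=\pm i$, homology of the double branched cover at the third/sixth roots) are the standard ones. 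Two minor caveats: the monomial prefactor in Thistlethwaite's identity depends not only on the writhe but also on rank/nullity data of $G(L)$, which is harmless since it is explicitly computable; and invoking the full planar Tutte dichotomy (Vertigan) is a heavier tool than strictly needed --- JVW themselves establish \textsc{\#P}-hardness for planar graphs along the single hyperbola $xy=1$ away from the eight exceptional points by planarity-preserving stretch/thicken interpolation, which is exactly the ``technically deepest ingredient'' you flag. With either of those hardness inputs granted, your reduction is sound, and alternating diagrams suffice for the hardness direction, so the signed-graph generalisation is only needed for completeness, not correctness.
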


We now introduce the theory of braids, which provides us with a convenient way to represent any link.
\begin{definition}[Braid]
Let
\begin{align}
    A = \{(x, 0, 0) \mid x \in \mathbb{Z}^+, x \leq n \,\}, \notag \\
    B = \{\,(x, 0, 1) \mid x \in \mathbb{Z}^+, x \leq n \,\}. \notag
\end{align}
Then, an \mbox{$n$-strand} braid is a collection of non-intersecting smooth paths in $\mathbb{R}^3$ connecting the points in $A$ to the points in $B$.
\end{definition}
Informally, a braid is a collection of strands of string that  may cross over and under each other, and must always move from left to right. An example of a braid is given in \mbox{Fig.~\ref{figure:Braid}}.
\begin{figure}[ht]
    \centering
    \includegraphics[width=.70\linewidth]{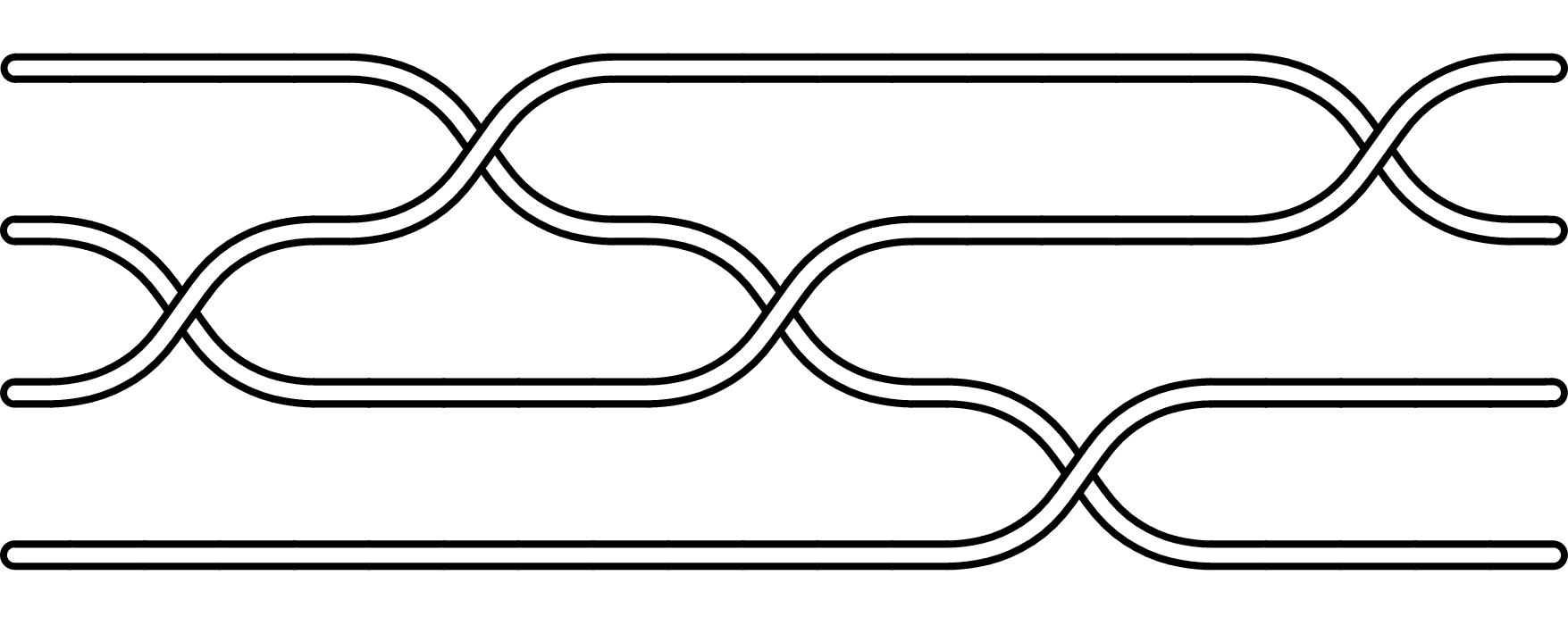}
    \caption{An example of a braid on $4$ strands.}
    \label{figure:Braid}
\end{figure}

The set of all braids on $n$ strands form an infinite group $B_n$, generated by the $n-1$ generators $\{\sigma_i\}$ and their inverses $\{\sigma_i^{-1}\}$. The generator $\sigma_i$ crosses the $i^{\mathrm{th}}$ strand over the $(i+1)^{\mathrm{th}}$ strand and its inverse $\sigma_i^{-1}$ crosses the $i^{\mathrm{th}}$ strand under the $(i+1)^{\mathrm{th}}$ strand.
\begin{definition}[Braid group]
    The braid group on $n$ strands $B_n$ is the group given by the Artin presentation
    \begin{align}
        \left\langle\{\sigma_i\}_{i=1}^n\middle|
        \begin{aligned}
        \sigma_i\sigma_{i+1}\sigma_i =& \sigma_{i+1}\sigma_i\sigma_{i+1} && \text{for } 1\leq i \leq n-2 \\
        \sigma_i\sigma_j =& \sigma_j\sigma_i && \text{for }\;\abs{i-j} \geq 2
        \end{aligned}\;\right\rangle. \notag
    \end{align}
\end{definition}
Each braid can be described by a \emph{braid word}.
\begin{definition}[Braid word]
    A braid word is word on the set of generators $\{\sigma_i\}$ and their inverses $\{\sigma_i^{-1}\}$. The \emph{length} of a braid word is the number of characters in the word.
\end{definition}

We can connect the endpoints of any braid in a number of ways to form a link. For a braid with an even number of strands a natural way to do this is by the \emph{plat closure}.
\begin{definition}[Plat closure]
    The plat closure of a \mbox{$2n$-strand} braid \mbox{$b \in B_{2n}$} is the link formed by connecting pairs of adjacent strands on the left and the right of the braid. The link that is formed by the plat closure of the braid is often denoted $b^{pl}$.
\end{definition}
Alexander~\cite{alexander1923lemma} showed that we can generate all possible links this way. We can, therefore, describe any link as the closure of a braid given by its braid word.
\begin{theorem}[Alexander~\cite{alexander1923lemma}]
    Every link can be represented by the closure of some braid.
\end{theorem}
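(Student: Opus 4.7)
The plan is to prove this via Alexander's original braiding algorithm, which constructively converts any link diagram into the closure of a braid. First I would fix a generic diagram of the link $L$ in the plane $\mathbb{R}^2$, meaning a regular projection whose only singularities are transverse double points (with over/under information), and orient each component. I would then pick an auxiliary point $p \in \mathbb{R}^2$, the \emph{braid axis}, chosen away from the diagram, and view the plane in polar coordinates $(r,\theta)$ centred at $p$. The goal is to modify the diagram, through ambient isotopy, so that every oriented arc is traversed in the direction of increasing $\theta$; any such diagram is, after cutting along a ray from $p$, manifestly the closure of a braid.

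The key step is Alexander's trick for eliminating \emph{bad arcs}, i.e.\ arcs which are traversed with decreasing $\theta$. First I would subdivide the diagram into short oriented arcs by marking points on it so that each arc contains no crossing except possibly at its endpoints and is monotone in $\theta$. A bad arc $\alpha$ with endpoints $a,b$ can be replaced by two new arcs going the ``long way round'' the axis: push $\alpha$ across $p$, keeping $a$ and $b$ fixed, so that the new arcs each go in the direction of increasing $\theta$. Crossings introduced with the rest of the diagram can be resolved consistently by routing $\alpha$ either entirely above or entirely below the other strands it now meets, which is an ambient isotopy of the link in $\mathbb{R}^3$ and therefore preserves the link type.

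Next I would argue termination. A simple measure of progress is the total number of bad arcs, or alternatively the number of crossings that lie on bad arcs; each application of the trick strictly decreases this measure, because the two replacement arcs are good by construction and no new bad arcs are created elsewhere. After finitely many iterations every arc is good, and the diagram winds monotonically around $p$. Cutting along a ray from $p$ unrolls the diagram into a collection of strands running from one vertical segment to another, with transverse crossings, which by definition is a braid $b \in B_n$ whose standard (trace) closure recovers $L$.

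The main obstacle I expect is the bookkeeping in the Alexander trick: when the bad arc $\alpha$ is pushed across the axis, it may visually cross many other strands, and one must check carefully that choosing a consistent height (entirely over or entirely under) yields a planar diagram that still represents $L$ and that the induced orientation on the new arcs is indeed increasing in $\theta$. Once this local move is established as an ambient isotopy together with a monotone decrease of the bad-arc count, the theorem follows. Finally, if one specifically wants the plat closure on $2n$ strands as used elsewhere in the paper, one invokes Birman's refinement: after obtaining any braid closure, stabilise by adding pairs of strands and apply Markov-type moves to convert the trace closure into a plat closure, which does not change the underlying link.
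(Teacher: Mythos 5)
The paper does not prove this statement at all: it is quoted as a classical result and attributed to Alexander's 1923 paper, so there is no in-paper argument to compare against. Your sketch is a faithful reconstruction of the standard proof (Alexander's braiding trick), and its overall logic --- choose a braid axis, subdivide into $\theta$-monotone arcs, throw each bad arc over or under the rest of the diagram to the far side of the axis, and terminate by a strictly decreasing count of bad arcs --- is correct. One technical point to tighten: the subdivision should be chosen so that each arc contains \emph{at most one crossing in its interior} (with subdivision points away from crossings), not ``no crossing except possibly at its endpoints.'' This matters because when the bad arc $\alpha$ does contain a crossing, the choice of routing it entirely above or entirely below the strands it sweeps past is \emph{forced} (above if $\alpha$ is the over-strand at that crossing, below if it is the under-strand), and an arc carrying both an over- and an under-crossing could not be thrown consistently; placing subdivision points at crossings also obstructs the isotopy, since the endpoints are held fixed while the over/under data at those crossings must be preserved. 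Your final remark is also well taken and relevant to this paper specifically: the theorem as used here feeds into the \emph{plat} closure of braids in $B_{2n}$, whereas Alexander's construction yields the trace closure, so the passage from one to the other (via Birman's correspondence between closed braids and plats, or directly via a bridge presentation of the link) is a genuinely needed extra step rather than an optional refinement.
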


\section{The Jones Polynomial and Quantum Computing}
\label{section:JonesPolynomialAndQuantumComputing}

Freedman, Kitaev, and Wang~\cite{freedman2002simulation} established a quantum algorithm for additively approximating the Jones polynomial at any principle root of unity in polynomial time. This algorithm was later formalised by Aharonov, Jones, and Landau~\cite{aharonov2009polynomial}. Freedman, Larsen, and Wang~\cite{freedman2002modular} proved that when \mbox{$\omega=\exp(2\pi i/k)$} is a \emph{principle non-lattice root of unity}, i.e. $k=5$ or $k\geq7$, the problem of additively approximating the Jones polynomial is universal for quantum computation. Aharonov and Arad~\cite{aharonov2011bqp} extended this result to values of $k$ that grow polynomially with the number of strands and crossings.
\begin{theorem}[Aharonov and Arad~\cite{aharonov2011bqp}]
    Let $\omega$ be a principle non-lattice root of unity, and let \mbox{$b \in B_{2n}$} be a braid. Then, the problem of additively approximating the Jones polynomial $V_{b^{pl}}(\omega)$ to within the same accuracy as the Aharonov-Jones-Landau algorithm~\emph{\cite{aharonov2009polynomial}} is \mbox{\emph{\textsc{BQP}-hard}}.
\end{theorem}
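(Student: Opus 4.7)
The plan is to reduce the standard \textsc{BQP}-complete problem (deciding whether a polynomial-size quantum circuit accepts with probability at least $2/3$ or at most $1/3$) to additive approximation of $V_{b^{pl}}(\omega)$ at a principle non-lattice root of unity $\omega=\exp(2\pi i/k)$. The heart of the reduction is the path model representation $\rho_k\colon B_{2n}\to\mathrm{U}(V_{n,k})$, which assigns to each generator $\sigma_i$ a unitary acting on the space of admissible paths in a truncated Young lattice at level $k$. Two facts drive the argument: first, by the result of Aharonov, Jones, and Landau, the Jones polynomial $V_{b^{pl}}(\omega)$ is proportional (with an explicitly computable, efficiently bounded prefactor) to a specific matrix element $\bra{\mathbf{0}}\rho_k(b)\ket{\mathbf{0}}$ in the path model; second, for $k=5$ or $k\geq 7$, the Freedman--Larsen--Wang density theorem says that the image $\rho_k(B_{2n})$ is dense in (the relevant product of) $\mathrm{SU}(V_{n,k})$.

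The key steps, in order, would be: (i) Fix a universal gate set for \textsc{BQP}, say $\{H,\mathrm{CNOT},T\}$ acting on $n$ logical qubits, and embed the $2^n$-dimensional computational space isometrically into $V_{n,k}$ for an appropriately chosen polynomially growing $k=k(n)$. (ii) Use the density result to produce, for each logical gate, a braid word $w_g$ of length $\mathrm{poly}(n,\log(1/\delta))$ such that $\rho_k(w_g)$ approximates the gate to inverse-polynomial precision $\delta$ on the image of the embedding; here one invokes a Solovay--Kitaev-style compilation in the path model, which is available because the generators and their inverses are explicitly efficiently computable. (iii) Concatenate these braid words to obtain a single braid $b$ whose $\bra{\mathbf{0}}\rho_k(b)\ket{\mathbf{0}}$ equals, up to inverse-polynomial error, the acceptance amplitude of the target quantum circuit. (iv) Invert the proportionality between this matrix element and $V_{b^{pl}}(\omega)$, checking that the Aharonov--Jones--Landau accuracy (which scales with the Markov trace normalisation) is precisely tight enough that the additive approximation of $V_{b^{pl}}(\omega)$ discriminates the $2/3$ vs.\ $1/3$ gap.

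The main obstacle is item (ii) combined with the accuracy bookkeeping in (iv): the Aharonov--Jones--Landau accuracy degrades with the dimension $\dim V_{n,k}$, which grows polynomially when $k$ grows polynomially, so one must ensure that the Solovay--Kitaev approximation error for the logical gates stays well below the additive precision one is entitled to demand. This is what forces $k$ to be allowed to grow with the number of strands and crossings in the Aharonov--Arad extension: at fixed $k$ the \emph{relative} density of $\rho_k(B_{2n})$ is sufficient, but as $k$ grows one has to re-prove, using effective bounds on the spectral gap of the Jones representation, that the compilation overhead remains polynomial. I would handle this by invoking an effective version of the Kitaev--Solovay theorem inside $\mathrm{SU}(V_{n,k})$, where the relevant constants depend only on $\mathrm{poly}(k)$ and on a universal spectral gap for the generators of $\rho_k$, both of which are known to be controllable.

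Finally, the reduction is clearly classical and polynomial-time: constructing the braid word $b$ from the circuit requires only the Solovay--Kitaev compilation, the proportionality prefactor between $\bra{\mathbf{0}}\rho_k(b)\ket{\mathbf{0}}$ and $V_{b^{pl}}(\omega)$ is an explicit formula in $\omega$ and the writhe of $b$, and the length of $b$ is polynomial in $n$ and $\log(1/\delta)$. Hence any oracle that additively approximates $V_{b^{pl}}(\omega)$ to Aharonov--Jones--Landau accuracy decides an arbitrary \textsc{BQP} language, establishing \textsc{BQP}-hardness of the approximation problem.
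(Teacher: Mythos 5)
The paper does not contain a proof of this statement: it is quoted from Aharonov and Arad, which in turn builds on the Freedman--Larsen--Wang density theorem and the Aharonov--Jones--Landau algorithm, so there is no internal argument to compare against. Your sketch follows essentially the standard reduction of those cited works for a fixed non-lattice root ($k=5$ or $k\geq 7$): encode a \textsc{BQP} circuit into a braid using density of the path model image, compile each logical gate by Solovay--Kitaev into a $\mathrm{poly}(n,\log(1/\delta))$-length braid word, and convert the matrix element $\bra{0}\rho_k(b)\ket{0}$ into $V_{b^{pl}}(\omega)$ via the explicit prefactor, checking that the AJL accuracy scale is exactly the one at which the $2/3$ versus $1/3$ gap survives --- and that fixed-$k$ version is the one this paper actually relies on. Two points in your write-up would need repair before it counts as a proof. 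First, Solovay--Kitaev must be applied gate-by-gate on the constant-dimensional subspaces spanned by the paths of the constantly many strands encoding each logical qubit (or qubit pair); invoking it inside $\mathrm{SU}(V_{n,k})$, whose dimension grows like the Catalan numbers, would give no polynomial bound. Second, the regime where $k$ grows with the number of strands and crossings --- the genuinely new content of Aharonov and Arad --- does not follow from an effective Solovay--Kitaev with $\mathrm{poly}(k)$ constants plus a spectral-gap bound as you assert; the FLW density argument is non-effective in $k$, and Aharonov--Arad replace it rather than patch it. Finally, step (iii) should be phrased so that the acceptance probability, not merely an amplitude, is read off from $\abs{\bra{0}\rho_k(b)\ket{0}}^2$; this is standard, but it is where the $2/3$ versus $1/3$ bookkeeping actually lives.
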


The Aharonov-Jones-Landau algorithm is based on the \emph{path model representation of the braid group}~\cite{jones1983braid, jones1985polynomial}, which is unitary when \mbox{$\omega=\exp(2\pi i/k)$} is a principle root of unity. For an integer $k$, the $k^\mathrm{th}$ path model representation of the braid group $B_{2n}$ is defined on the vector space spanned by walks of length $2n$, on a $k-1$ vertex path graph $G_k$, which start and finish on the first vertex.

To calculate the dimension of this vector space it is sufficient to count the number of walks of length $2n$ on the graph $G_k$. From a combinatorial perspective, the walks on the graph $G_k$ can be seen as \emph{Dyck paths} of length $2n$, which never go above a height $k-2$. It is well known that the number of Dyck paths of length $2n$ is the $n^{\mathrm{th}}$ \emph{Catalan number}, which provides an upperbound for the dimension of the vector space.
\begin{definition}[Catalan number]
    The $n^{\mathrm{th}}$ Catalan number is defined by
    \begin{align}
        C_n := \frac{1}{(n+1)}\binom{2n}{n}. \notag
    \end{align}
\end{definition}
\begin{claim}
    \label{claim:CatalanNumberUpperbound}
    For $n\geq1$,
    \begin{align}
        C_n < 4^n. \notag
    \end{align}
\end{claim}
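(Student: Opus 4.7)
The plan is to bound the central binomial coefficient $\binom{2n}{n}$ by the total mass of the binomial distribution, and then divide by $n+1$. Concretely, first I would observe that $\binom{2n}{n}$ appears as a single term in the expansion
\begin{align}
    4^n = (1+1)^{2n} = \sum_{k=0}^{2n}\binom{2n}{k}, \notag
\end{align}
and since every summand is non-negative, $\binom{2n}{n} \leq 4^n$.

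Next I would use the definition $C_n = \binom{2n}{n}/(n+1)$ and note that for $n\geq 1$ the denominator satisfies $n+1 \geq 2 > 1$, so
\begin{align}
    C_n = \frac{1}{n+1}\binom{2n}{n} \leq \frac{1}{n+1}\cdot 4^n < 4^n. \notag
\end{align}
This gives the claim for all $n\geq 1$.

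There is no real obstacle here: the inequality is a one-line consequence of the binomial theorem, and the strict inequality follows simply because $n+1\geq 2$ whenever $n\geq 1$. If one wanted a slightly tighter bound one could invoke the standard estimate $\binom{2n}{n} \leq 4^n/\sqrt{\pi n}$ from Stirling's formula, but this is unnecessary for the stated claim.
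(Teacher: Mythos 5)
Your proof is correct, but it takes a more elementary route than the paper. The paper dispatches the claim with Stirling's approximation for factorials, which yields the sharper asymptotic estimate $\binom{2n}{n} \sim 4^n/\sqrt{\pi n}$ and hence $C_n = O(4^n/n^{3/2})$ --- more than is needed, at the cost of invoking an asymptotic tool. You instead bound the central binomial coefficient by the full binomial expansion $4^n = \sum_{k=0}^{2n}\binom{2n}{k}$ and then divide by $n+1 \geq 2$ to get strictness; this is purely finite and combinatorial, requires no analysis, and proves exactly the stated inequality for every $n \geq 1$ without any hidden ``for $n$ sufficiently large'' caveat that a careless appeal to Stirling might leave open. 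One small remark: the strict inequality already follows from the binomial-theorem step alone (the terms $k \neq n$ are positive), so the division by $n+1$ only reinforces it; either observation suffices. Both arguments are fine; yours is the more self-contained, while the paper's gives a quantitatively stronger bound that could matter if one wanted tighter control on the dimension of the path-model representation.
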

\begin{proof}
    The claim follows directly from Stirling's approximation for factorials.
\end{proof}

In this representation, each braid \mbox{$b \in B_{2n}$} is mapped to a unitary matrix $\rho_k(b)$ composed of algebraic entries. These unitary matrices have the property that the expectation value \mbox{$\bra{0}\rho_k(b)\ket{0}$} is proportional, up to an efficiently computable factor, to the Jones polynomial $V_{b^{pl}}(\omega)$ of the plat closure of $b$. Aharonov, Jones, and Landau~\cite{aharonov2009polynomial} showed that such representations can be implemented efficiently on a quantum computer.

In their construction, the unitary representation of each generator $\rho_k(\sigma_i^\pm)$ of the braid group $B_{2n}$ acts on a subspace of the Hilbert space of qudits. The Solovay-Kitaev theorem~\cite{kitaev2002classical} guarantees that these unitary matrices can be implemented efficiently. An entire braid \mbox{$b \in B_{2n}$} is implemented efficiently by applying the corresponding unitary matrix of each generator in the order of the braid word of $b$.

\section{Random Quantum Computations and Random Links}
\label{section:RandomQuantumComputationsAndRandomLinks}

We now relate random quantum computations and the Jones polynomial of random links. We define a \emph{random link} to be the plat closure of a \emph{random braid}.
\begin{definition}[Random braid]
    A random braid on $2n$ strands is generated by uniformly at random choosing generators from the set $\{\sigma_i^\pm\}_{i=1}^{2n-1}$.
\end{definition}
\begin{definition}[Random link]
    A random link is generated by the plat closure of a random braid.
\end{definition}

In the $k^{\mathrm{th}}$ path model representation the generators of the braid group $\{\sigma_i^\pm\}$ are mapped to unitary matrices $\{\rho_k(\sigma_i^\pm)\}$. In this representation, a random braid is equivalent to a product of random matrices chosen uniformly at random from the set $\{\rho_k(\sigma_i^\pm)\}$. Since each $\rho_k(\sigma_i^\pm)$ acts on a subspace of the Hilbert space of qudits, a random braid is equivalent to a \mbox{$G$-local} random quantum circuit, with the number of strands proportional to the number of qudits. When $k=5$ or $k\geq7$ these gates are universal for quantum computation.

\begin{theorem}
    \label{theorem:BraidRepresentationTDesign}
    In the $k^{\mathrm{th}}$ path model representation with $k=5$ or $k\geq7$, there exists a constant $\lambda>0$, such that random braids on $2n$ strands of length
    \begin{align}
        \lambda n\left\lceil\log_2(4t)\right\rceil^2t^5t^{3.1/\log(2)}\left[t\log\left(C_n\right)+\log(1/\epsilon)\right], \notag
    \end{align}
    form an \mbox{$\epsilon$-approximate} unitary \mbox{$t$-design}.
\end{theorem}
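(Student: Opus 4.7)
The plan is to invoke Theorem~\ref{theorem:RandomQuantumCircuitsPolynomialDesigns} applied to the gate set $G=\{\rho_k(\sigma_i^\pm)\}_{i=1}^{2n-1}$ coming from the $k$-th path model representation. Three hypotheses of the Brandao-Harrow-Horodecki result must be verified: (i) $G$ is a universal gate set, (ii) $G$ is closed under inverses, and (iii) every $\rho_k(\sigma_i^\pm)$ has algebraic entries.

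For (i), universality of the path model representation at $\omega=\exp(2\pi i/k)$ with $k=5$ or $k\geq 7$ is the classical theorem of Freedman, Larsen, and Wang recalled in Section~\ref{section:JonesPolynomialAndQuantumComputing}. For (ii), closure under inverses is automatic, since $\rho_k$ is a group homomorphism and therefore $\rho_k(\sigma_i)^{-1}=\rho_k(\sigma_i^{-1})$, with both generators already appearing in $G$. For (iii), the explicit formulas for $\rho_k(\sigma_i^\pm)$ express each matrix entry as a polynomial in $\omega^{1/2}$, which is a root of unity and hence algebraic.

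Next, I would identify a random braid with a $G$-local random quantum circuit in the Aharonov-Jones-Landau encoding discussed in Section~\ref{section:JonesPolynomialAndQuantumComputing}. Uniformly sampling from the $2(2n-1)$ generators $\sigma_i^\pm$ factors into independently sampling a position $i\in[2n-1]$ and a sign in $\{+,-\}$, which is exactly the sampling rule in the definition of a $G$-local random quantum circuit. In that encoding, each $\rho_k(\sigma_i^\pm)$ acts on a constant number of neighbouring qudits, and the representation Hilbert space sits inside a qudit space of total dimension at most $C_n$, by the Catalan-number counting of Dyck paths of length $2n$.

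Finally, I would substitute parameters into the bound of Theorem~\ref{theorem:RandomQuantumCircuitsPolynomialDesigns}: taking the base dimension to be $d=2$ and the effective qudit count to be $\lceil\log_2(C_n)\rceil\leq 2n$, which is valid by Claim~\ref{claim:CatalanNumberUpperbound}, the factor $nt\log(d)$ becomes $t\log(C_n)$ and the overall $n$ prefactor of the BHH bound is absorbed into the $n$ prefactor in the statement up to a constant folded into $\lambda$. The main obstacle is justifying that the ``$G$-local'' hypothesis of Theorem~\ref{theorem:RandomQuantumCircuitsPolynomialDesigns} is genuinely satisfied by the path model representation; concretely, that the AJL encoding realises each $\rho_k(\sigma_i^\pm)$ as a $2$-local gate on neighbouring qudits so that the random-walk spectral-gap argument of BHH applies on the invariant subspace carrying the representation. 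Once this structural correspondence is established, the stated length bound follows by direct substitution.
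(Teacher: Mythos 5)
Your proposal follows essentially the same route as the paper's own (one-line) proof: invoke Theorem~\ref{theorem:RandomQuantumCircuitsPolynomialDesigns} for the gate set $\{\rho_k(\sigma_i^\pm)\}$ (universal by Freedman--Larsen--Wang at $k=5$ or $k\geq7$, closed under inverses, algebraic entries), bound the dimension of the path model space by the Catalan number $C_n$, take the local dimension to be $2$, and absorb constants into $\lambda$. The locality subtlety you flag --- that each $\rho_k(\sigma_i^\pm)$ genuinely acts as a local gate on neighbouring qudits in the AJL encoding so that the $G$-local hypothesis applies --- is likewise left implicit in the paper, which simply asserts that a random braid is equivalent to a $G$-local random quantum circuit, so your account is if anything more explicit than the paper's.
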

\begin{proof}
    The proof follows from combining \mbox{Theorem~\ref{theorem:RandomQuantumCircuitsPolynomialDesigns}} with the fact that the dimension of the vector space in the path model representation is bounded from above by the $n^{\mathrm{th}}$ Catalan number and that the local dimension is bounded from below by $2$.
\end{proof}

\begin{corollary}
    \label{corollary:RandomBraidsDesign}
    In the $k^{\mathrm{th}}$ path model representation with $k=5$ or $k\geq7$, there exists a constant $\lambda>0$, such that random braids on $2n$ strands of length
    \begin{align}
        \lambda n\left[n+\log(1/\epsilon)\right], \notag
    \end{align}
    form an \mbox{$\epsilon$-approximate} unitary \mbox{$2$-design}.
\end{corollary}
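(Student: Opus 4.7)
The plan is to derive the corollary as a direct specialisation of Theorem~\ref{theorem:BraidRepresentationTDesign} at $t=2$, combined with the Catalan bound of Claim~\ref{claim:CatalanNumberUpperbound}. The hypotheses on $k$ transfer verbatim, so the only work is to simplify the length expression until only the $n$ and $\log(1/\epsilon)$ dependence remains explicit.

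First I would substitute $t=2$ into the bound from Theorem~\ref{theorem:BraidRepresentationTDesign}. The multiplicative prefactor $\lceil\log_2(4t)\rceil^2\, t^5\, t^{3.1/\log(2)}$ then collapses to a fixed positive numerical constant, independent of $n$ and $\epsilon$, which I can absorb into a redefined $\lambda$. This leaves a length of the form
\begin{align}
    \lambda' n\,[\,2\log(C_n) + \log(1/\epsilon)\,]. \notag
\end{align}

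Next I would apply Claim~\ref{claim:CatalanNumberUpperbound}, which gives $C_n < 4^n$ and hence $\log(C_n) < n\log 4$ for $n \geq 1$. Consequently $2\log(C_n) \leq (4\log 2)\,n$, and the bracketed term is bounded by $\max(4\log 2,\,1)\,[\,n + \log(1/\epsilon)\,]$. Folding this last numerical constant into $\lambda'$ produces the stated length $\lambda n[n + \log(1/\epsilon)]$.

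The argument is essentially a book-keeping exercise and there is no real obstacle. The only point to verify is that every constant absorbed into the final $\lambda$ is genuinely independent of $n$ and $\epsilon$, which is immediate since $t=2$ is fixed and Claim~\ref{claim:CatalanNumberUpperbound} holds uniformly for $n \geq 1$.
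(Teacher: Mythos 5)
Your proposal is correct and follows exactly the paper's route: specialise Theorem~\ref{theorem:BraidRepresentationTDesign} to $t=2$ and bound $\log(C_n)$ via Claim~\ref{claim:CatalanNumberUpperbound}, absorbing all the resulting numerical constants into $\lambda$. The paper states this in one line; your version simply spells out the same book-keeping.
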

\begin{proof}
    The proof follows from setting $t=2$ in \mbox{Theorem~\ref{theorem:BraidRepresentationTDesign}} and from the upperbound for the $n^{\mathrm{th}}$ Catalan number found in \mbox{Claim~\ref{claim:CatalanNumberUpperbound}}.
\end{proof}

We now relate the classical simulation of random quantum computations and the complexity of approximating the Jones polynomial of random links.
\begin{theorem}
    \label{theorem:ClassicalSimulationRandomBraids}
    Fix \mbox{$0<\epsilon<1$}. Let $k=5$ or $k\geq7$ be an integer, and \mbox{$\omega=\exp(2\pi i/k)$} its corresponding root of unity. Let \mbox{$b \in B_{2n}$} be a random braid on $2n$ strands of length \mbox{$\Omega\left[n(n+\log(1/\epsilon))\right]$}. Let $\rho_k(b)$ be the $k^{\mathrm{th}}$ path model representation of $b$, and let $\mathcal{D}_{\rho_k(b)}$ be its corresponding probability distribution. Suppose that there is a classical polynomial-time algorithm $C$, which, for any $b$, samples from a probability distribution $\mathcal{D}^\prime$, such that \mbox{$\norm{\mathcal{D}^\prime-\mathcal{D}_{\rho(b)}}_1 \leq \mu$} and assume that \mbox{Conjecture~\ref{conjecture:AverageCaseComplexityJonesPolynomial}} holds. Then, there is a $\emph{\textsc{BPP}}^\emph{\textsc{NP}}$ algorithm for solving any problem in $\emph{\textsc{P}}^\emph{\textsc{\#P}}$ and by Toda's Theorem the Polynomial Hierarchy collapses to its third level.
\end{theorem}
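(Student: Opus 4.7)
The plan is to chain together four ingredients already in place: the design property of Corollary~\ref{corollary:RandomBraidsDesign}, the sampling-to-counting reduction of Theorem~\ref{theorem:ClassicalSimulationRandomQuantumComputations}, the Aharonov--Jones--Landau identification of path-model amplitudes with Jones polynomials, and the average-case Conjecture~\ref{conjecture:AverageCaseComplexityJonesPolynomial}. First, by Corollary~\ref{corollary:RandomBraidsDesign}, random braids of length $\Omega[n(n+\log(1/\epsilon))]$ induce a distribution on $\rho_k(b)$ that is an $\epsilon$-approximate unitary $2$-design in the $k^\mathrm{th}$ path model representation (valid exactly because $k=5$ or $k\geq 7$ makes this representation unitary and universal). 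This matches the hypothesis of Theorem~\ref{theorem:ClassicalSimulationRandomQuantumComputations} verbatim.

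Second, feeding the assumed sampler $C$ into Theorem~\ref{theorem:ClassicalSimulationRandomQuantumComputations}, for any $\gamma\in(0,1-\epsilon)$ of my choosing there is an $\textsc{FBPP}^{\textsc{NP}^C}$ procedure that approximates $\abs{\bra{0}\rho_k(b)\ket{0}}^2$ to relative error $\tfrac{4\mu(1+\epsilon)^2}{\gamma(1-\epsilon-\gamma)^2}+o(1)$ on at least a $\tfrac{(1-\epsilon-\gamma)^2}{4(1+\epsilon)}$ fraction of braids. Because $C$ is itself a classical polynomial-time algorithm, the oracle collapses via the standard argument that a \textsc{BPP}-computable sampling oracle inside a $\textsc{BPP}^\textsc{NP}$ machine can be absorbed into the outer machine's randomness, yielding an $\textsc{FBPP}^\textsc{NP}$ algorithm outright.

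Third, I translate amplitudes into Jones polynomials using the construction of Aharonov, Jones, and Landau~\cite{aharonov2009polynomial} reviewed in Section~\ref{section:JonesPolynomialAndQuantumComputing}: $\bra{0}\rho_k(b)\ket{0}$ equals $V_{b^{pl}}(\omega)$ up to an efficiently computable normalization factor depending only on $n$ and the writhe of $b$. Dividing through, the $\textsc{FBPP}^\textsc{NP}$ approximation of the squared amplitude gives a relative-error approximation of $\abs{V_{b^{pl}}(\omega)}^2$, and hence of $\abs{V_{b^{pl}}(\omega)}$, on the same constant fraction of random braids; taking the square root at most doubles the relative error. Since the plat closure of a random braid is by definition a random link, this is precisely an average-case relative-error approximation of the Jones polynomial.

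Fourth, Conjecture~\ref{conjecture:AverageCaseComplexityJonesPolynomial} asserts that this task is \textsc{\#P}-hard, so the $\textsc{FBPP}^\textsc{NP}$ algorithm solves a \textsc{\#P}-hard problem; combined with Toda's Theorem (Theorem~\ref{theorem:Toda}), one obtains $\textsc{PH}\subseteq \textsc{P}^{\textsc{\#P}} \subseteq \textsc{BPP}^\textsc{NP} \subseteq \Sigma_3^p$, which is the claimed collapse. The main point requiring care is the parameter accounting: one must fix $\epsilon$ small enough and tune $\gamma$ so that the resulting relative error and the fraction of "good" instances simultaneously fall within the constants quantified by Conjecture~\ref{conjecture:AverageCaseComplexityJonesPolynomial}; a secondary technicality is justifying $\textsc{FBPP}^{\textsc{NP}^C}\subseteq \textsc{FBPP}^\textsc{NP}$ when $C$ is a sampler rather than a decision oracle, which is routine but should be stated explicitly.
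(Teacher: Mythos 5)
Your proposal is correct and follows essentially the same route as the paper, whose proof is exactly the combination of Theorem~\ref{theorem:ClassicalSimulationRandomQuantumComputations}, Corollary~\ref{corollary:RandomBraidsDesign}, and Toda's Theorem that you spell out (with the AJL amplitude--Jones identification and the absorption of the sampler oracle $C$ left implicit in the paper). The only quibble is your remark that taking a square root ``at most doubles'' the relative error --- it in fact roughly halves it --- but this does not affect the argument.
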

\begin{proof}
    The proof follows from combining \mbox{Theorem~\ref{theorem:ClassicalSimulationRandomQuantumComputations}}, \mbox{Corollary~\ref{corollary:RandomBraidsDesign}}, and Toda's Theorem \mbox{(Theorem~\ref{theorem:Toda})}.
\end{proof}

\begin{conjecture}
    \label{conjecture:AverageCaseComplexityJonesPolynomial}
    In the notation of \mbox{Theorem~\ref{theorem:ClassicalSimulationRandomBraids}}. For some \mbox{$0<\gamma<1-\epsilon$}, it is \mbox{\textsc{\#P}-hard} to approximate the Jones polynomial $V_{b^{pl}}(\omega)$ up to a relative error \mbox{$\frac{4\mu(1+\epsilon)^2}{\gamma(1-\epsilon-\gamma)^2}+o(1)$} on at least a $\frac{(1-\epsilon-\gamma)^2}{4(1+\epsilon)}$ fraction of random braids.
\end{conjecture}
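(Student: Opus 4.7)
The plan is to establish a worst-case to average-case reduction: start from a braid $b^{\star} \in B_{2n}$ for which relative-error approximation of $V_{(b^{\star})^{pl}}(\omega)$ is \textsc{\#P}-hard (guaranteed by Kuperberg~\cite{kuperberg2009hard}), and exhibit a randomised procedure that produces braids distributed close to the uniform random-braid distribution, yet from whose relative-error approximations one can recover the value at $b^{\star}$. The natural tool is a Lipton-style low-degree polynomial interpolation adapted to the path model representation: since $\langle 0 | \rho_k(b) | 0 \rangle$ is a polynomial of degree at most $L$ (the braid length) in the matrix entries of the generators $\rho_k(\sigma_i^{\pm})$, one can hope to parameterise a family of braids $b(\theta)$ whose image under $\rho_k$ varies polynomially in $\theta$, query the oracle at several random $\theta_i$, and interpolate to recover the target value at the distinguished point corresponding to $b^{\star}$.

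The key steps, in order, are: (i) design a parameterised family whose intermediate instances are each distributed close to a uniformly random braid of length $\Omega[n(n+\log(1/\epsilon))]$, so that the average-case oracle postulated by the conjecture may legitimately be invoked; (ii) use the approximate $2$-design property from Corollary~\ref{corollary:RandomBraidsDesign} to establish an anti-concentration bound on $|\langle 0|\rho_k(b)|0\rangle|^2$, ensuring that for a non-negligible fraction of braids the relative-error approximation actually carries useful information; (iii) verify that the interpolated estimate inherits the stated relative error $\frac{4\mu(1+\epsilon)^2}{\gamma(1-\epsilon-\gamma)^2}+o(1)$ from the pointwise relative errors of the oracle calls, and that the interpolation survives the $\frac{(1-\epsilon-\gamma)^2}{4(1+\epsilon)}$ missing fraction via standard self-correction; (iv) amplify using the usual \textsc{BPP}$^{\textsc{NP}}$ machinery, so that an approximation algorithm on average braids translates into a \textsc{\#P}-hardness certificate for the worst-case instance, which when combined with Theorem~\ref{theorem:ClassicalSimulationRandomBraids} forces the collapse of the Polynomial Hierarchy.

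The main obstacle, and the reason this statement is offered as a conjecture rather than a theorem, lies squarely in step (i). Unlike Lipton's classical reduction for the permanent, where the continuous perturbation $M \mapsto M + t(N-M)$ produces matrices that for small $t$ are close in distribution to the Gaussian ensemble, the braid group is discrete: a braid word of fixed length $L$ cannot be smoothly deformed into another while preserving its uniform distribution over length-$L$ words. Attempting to perform the interpolation in the unitary image under $\rho_k$ yields matrices that generically lie outside $\rho_k(B_{2n})$, and therefore do not correspond to any link whose Jones polynomial could be queried by the hypothesised oracle. A successful reduction would likely require either a genuinely new self-reducibility property of the Jones polynomial, perhaps exploiting skein relations or Markov moves in a nontrivial way, or non-relativising techniques of the sort highlighted by Aaronson and Chen~\cite{aaronson2016complexity}. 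Absent such a breakthrough, the conjecture can only be supported by analogy with the Permanent-of-Gaussians and IQP average-case conjectures, together with the fact that worst-case relative-error approximation of $V_{b^{pl}}(\omega)$ is already \textsc{\#P}-hard at these roots of unity.
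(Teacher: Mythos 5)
This statement is a conjecture, and the paper offers no proof of it; the authors themselves note in the conclusion that they are unaware of any technique for lifting Kuperberg's worst-case \mbox{\textsc{\#P}-hardness} to the average case and that, by Aaronson and Chen, any such proof must be non-relativising. Your proposal correctly recognises this and your diagnosis of the obstruction --- that the discreteness of the braid group blocks any Lipton-style polynomial interpolation, since interpolated unitaries generically fall outside $\rho_k(B_{2n})$ and hence correspond to no link --- is a sound and concrete elaboration of exactly the difficulty the paper alludes to, so your assessment matches the paper's position.
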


\mbox{Conjecture~\ref{conjecture:AverageCaseComplexityJonesPolynomial}} is based on the average-case complexity of relative-error approximations of Jones polynomials. It is known that it is \mbox{\textsc{\#P}-hard} to approximate the Jones polynomial up to a relative error in the worst case~\cite{kuperberg2009hard}. Therefore, \mbox{Conjecture~\ref{conjecture:AverageCaseComplexityJonesPolynomial}} states that this worst-case hardness result can be extended to an average-case hardness result.

Assuming that \mbox{Conjecture~\ref{conjecture:AverageCaseComplexityJonesPolynomial}} holds and the Polynomial Hierarchy does not collapse, \mbox{Theorem~\ref{theorem:ClassicalSimulationRandomBraids}} tells us that there is no efficient classical algorithm which can sample from any random quantum computation. This implies that random quantum computations can not be efficiently simulated by a classical computer.

It is worth noting that the $5^{\mathrm{th}}$ path model representation is equivalent to the \emph{Fibonacci representation of the braid group}~\cite{shor2008estimating}. Therefore, our results extend to the random braiding of Fibonacci anyons.

\section{Conclusion \& Outlook}
\label{section:ConclusionAndOutlook}

We have provided strong evidence that simulating random quantum computations is intractable for classical computers. Specifically, we have shown that if \mbox{Conjecture~\ref{conjecture:AverageCaseComplexityJonesPolynomial}} holds and the Polynomial Hierarchy does not collapse, then there is no efficient classical algorithm which can approximately sample from the output probability distribution of random quantum computations.

There are a number of natural problems that remain to be solved. The most obvious of which is to resolve \mbox{Conjecture~\ref{conjecture:AverageCaseComplexityJonesPolynomial}}. Unfortunately, we are unaware of any proof techniques which are capable of extending the worst-case hardness result to an average-case hardness result. Moreover, the results of Aaronson and Chen~\cite{aaronson2016complexity} imply that any proof of this conjecture would require non-relativising techniques.

Another natural problem is whether \mbox{Corollary~\ref{corollary:RandomBraidsDesign}} can be strengthened to random braids of a shorter length. In \mbox{Theorem~\ref{theorem:ClassicalSimulationRandomBraids}}, the length of a random braid is determined by the requirement that in the path model representation it is distributed according to an \mbox{$\epsilon$-approximate} unitary \mbox{$2$-design}. Therefore, any improvement to this bound yields a stronger version of \mbox{Theorem~\ref{theorem:ClassicalSimulationRandomBraids}}. It is an open problem whether this bound can be improved.

It would also be interesting to adapt our results to other functions, such as Tutte polynomials~\cite{aharonov2007polynomial}, Turaev-Viro invariants~\cite{alagic2010estimating}, and matrix permanents~\cite{rudolph2009simple}. These functions are all known to be \mbox{\textsc{\#P}-hard} to compute in the worst case and \mbox{\textsc{BQP}-hard} to approximate up to an additive error.

\section*{Acknowledgements}

We thank Scott Aaronson, Sergio Boixo, Adam Bouland, Gavin Brennen, Aram Harrow, Saeed Mehraban, Ashley Montanaro, Peter Rohde, and Marco Tomamichel for helpful discussions. MJB acknowledges support from the Australian Research Council via the Future Fellowship scheme \mbox{(grant FT110101044)} and as a member of the ARC Centre of Excellence for Quantum Computation and Communication Technology (CQC2T), project number CE170100012.

\appendix

\section{Proof of Theorem~\ref*{theorem:ClassicalSimulationRandomQuantumComputations}}
\label{section:ProofOfTheoremCSRQS}

We now prove \mbox{Theorem~\ref{theorem:ClassicalSimulationRandomQuantumComputations}}, which is restated below for convenience. Our proof requires several lemmas which we prove in the remainder of the section.

\ClassicalSimulationRandomQuantumComputations*

\begin{proof}
    \mbox{Lemma~\ref{lemma:ClassicalAdditiveApproximation}} tells us that, for any \mbox{$0<\delta<1$}, there is an $\textsc{FBPP}^{\textsc{NP}^C}$ algorithm, which approximates $\abs{\bra{x}U\ket{0}}^2$, up to an additive error
    \begin{align}
       O\left[(1+o(1))\frac{\mu(1+\epsilon)}{\delta d}+\frac{\abs{\bra{x}U\ket{0}}^2}{\mathrm{poly}(n)}\right], \notag
    \end{align}
    with probability at least $1-\delta$ over the choice of $U$. Combining this with \mbox{Lemma~\ref{lemma:ApproximateDesignAnticoncentration}} and setting \mbox{$\delta=\frac{(1-\epsilon-\gamma)^2}{4(1+\epsilon)}$}, it follows that there is an $\textsc{FBPP}^{\textsc{NP}^C}$ algorithm, which approximates $\abs{\bra{0}U\ket{0}}^2$ up to a relative error \mbox{$\frac{4\mu(1+\epsilon)^2}{\gamma(1-\epsilon-\gamma)^2}+o(1)$} on at least a $\frac{(1-\epsilon-\gamma)^2}{4(1+\epsilon)}$ fraction of matrices $U$.
\end{proof}

We now prove \mbox{Lemma~\ref{lemma:ClassicalAdditiveApproximation}}, which relates the simulation of random quantum computations to approximating individual output probabilities. Our proof closely follows that of Lemma 4 from Ref.~\cite{bremner2016average}.
\begin{lemma}
    \label{lemma:ClassicalAdditiveApproximation}
    Let $U$ be a $d \times d$ unitary matrix distributed according to an \mbox{$\epsilon$-approximate} unitary \mbox{$(t\geq1)$-design} and let $\mathcal{D}_U$ be its corresponding probability distribution. Suppose that there is a classical polynomial-time algorithm $C$, which, for any $U$, samples from a probability distribution $\mathcal{D}^\prime$, such that \mbox{$\norm{\mathcal{D}^\prime-\mathcal{D}_U}_1 \leq \mu$}. Then, for any $\delta$ such that \mbox{$0<\delta<1$}, there is an $\emph{\textsc{FBPP}}^{\emph{\textsc{NP}}^C}$ algorithm, which approximates $\abs{\bra{0}U\ket{0}}^2$, up to an additive error
    \begin{align}
       O\left[(1+o(1))\frac{\mu(1+\epsilon)}{\delta d}+\frac{\abs{\bra{0}U\ket{0}}^2}{\emph{poly}(n)}\right], \notag
    \end{align}
    with probability at least $1-\delta$ over the choice of $U$.
\end{lemma}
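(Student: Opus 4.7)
The plan is to use Stockmeyer's Counting Theorem (Theorem~\ref{theorem:StockmeyerCountingTheorem}) to convert the approximate classical sampler $C$ into an $\textsc{FBPP}^{\textsc{NP}^C}$ estimator for individual output probabilities of $\mathcal{D}'$, and then argue, via the total-variation hypothesis and the approximate 1-design property, that this estimator is close to $\abs{\bra{0}U\ket{0}}^2$ on a $(1-\delta)$-fraction of $U$.

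First, I would invoke Stockmeyer's theorem on $C$: for any output label $x$ there is an $\textsc{FBPP}^{\textsc{NP}^C}$ algorithm producing $\tilde{q}_x$ with $\abs{\tilde{q}_x - q_x(U)} \leq q_x(U)/\mathrm{poly}(n)$, where $q_x(U) := \mathbf{Pr}_{\mathcal{D}'}[x]$. Specialising to $x = 0$, writing $\Delta_0(U) := \abs{p_0(U) - q_0(U)}$ with $p_0(U) := \abs{\bra{0}U\ket{0}}^2$, and using the triangle inequality together with the crude bound $q_0(U) \leq p_0(U) + \Delta_0(U)$, I obtain
\begin{align*}
\abs{\tilde{q}_0 - p_0(U)} \leq (1 + o(1))\,\Delta_0(U) + \frac{p_0(U)}{\mathrm{poly}(n)}.
\end{align*}
Matching this against the claimed error bound reduces the problem to showing that $\Delta_0(U) = O(\mu(1+\epsilon)/(\delta d))$ with probability at least $1 - \delta$ over $U$.

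Second, the TVD hypothesis gives $\sum_x \Delta_x(U) \leq \mu$ deterministically for every $U$. Hence by Markov's inequality over a uniformly random $x \in [d]$, at least a $(1-\delta)$-fraction of outputs are ``good'' for $U$, meaning $\Delta_x(U) \leq \mu/(\delta d)$. The remaining task is to promote this bound from a uniformly random $x$ to the fixed label $x = 0$. Here I would exploit that the approximate 1-design is, up to a factor of $1 \pm \epsilon$, invariant under left-multiplication of $U$ by any permutation matrix: absorbing such a permutation into $U$ relabels outputs without leaving the design (with a multiplicative $(1 \pm \epsilon)$ slack), so the probability that $0$ is ``bad'' is at most $(1+\epsilon)$ times the probability that a uniform $x$ is bad, which is at most $\delta$. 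Substituting into the first step yields the claimed additive error with probability $\geq 1 - \delta$.

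The main obstacle is the symmetrization underlying the second step. Because $\Delta_0(U)$ depends on $U$ through the arbitrary classical algorithm $C$ --- and not as a polynomial in the matrix entries of $U$ --- the design property cannot be applied to it directly. The fix, which follows the structure of Lemma~4 of Ref.~\cite{bremner2016average}, is to absorb the output-relabelling permutation into the definition of the distribution $\mathcal{D}_U$ itself; the symmetrised distribution remains an $\epsilon$-approximate 1-design, and its statistics at the fixed output $0$ agree with those of the original design averaged over uniform $x$, up to the $(1 + \epsilon)$ factor that propagates into the final bound.
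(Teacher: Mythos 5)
Your first step coincides with the paper's: define $Q_U:=\abs{\bra{0}U\ket{0}}^2$ and $T_U:=\Pr[\text{$C$ outputs $0$ on input $U$}]$, apply Stockmeyer's Counting Theorem to get a relative-error estimate of $T_U$ in $\textsc{FBPP}^{\textsc{NP}^C}$, and use the triangle inequality with the crude bound $T_U\leq Q_U+\abs{Q_U-T_U}$ to reduce everything to controlling $\abs{Q_U-T_U}$. The divergence is in how that control is obtained. The paper derives the tail bound $\Pr_U\left[\abs{Q_U-T_U}\geq\mu(1+\epsilon)/(\delta d)\right]\leq\delta$ by applying Markov's inequality over the choice of $U$, with the approximate design condition (Lemma~\ref{lemma:ApproximateDesignCondition}) supplying the scale $(1+\epsilon)/d$; it never routes through a uniformly random output label. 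You instead apply Markov over a uniform $x\in[d]$ (which is fine: $\sum_x\Delta_x(U)\leq\mu$ gives $\Pr_x[\Delta_x(U)\geq\mu/(\delta d)]\leq\delta$ for every $U$) and then try to transfer "a random output is good" to "the fixed output $0$ is good" by asserting that an $\epsilon$-approximate design is, up to a $(1\pm\epsilon)$ factor, invariant under left multiplication by permutation matrices.

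That transfer step is a genuine gap. The design hypothesis constrains only expectations of polynomials in $\mathrm{Hom}_{(t,t)}(\mathrm{U}(d))$; it does not make the measure itself even approximately permutation-invariant, and the quantity you need to transfer --- the probability that $\Delta_0(U)>\mu/(\delta d)$ --- is not such a polynomial, since it depends on $U$ only through the behaviour of the arbitrary sampler $C$. Your proposed fix does not close this: the symmetrised ensemble $\{\Pi U\}$ is indeed still an $\epsilon$-approximate design and $p_{\Pi(0)}(\Pi U)=p_0(U)$, but nothing in the hypotheses prevents the hidden label $\Pi(0)$ from being recoverable from the description of $\Pi U$ (for a discrete design the pair $(\Pi,U)$ is typically determined by the instance), so an adversarial polynomial-time $C$ that shifts $\mu/2$ of probability mass onto a recognisable marked output of each instance satisfies $\norm{\mathcal{D}^\prime-\mathcal{D}_V}_1\leq\mu$ for every $V$ yet makes the error at your target output of order $\mu$, not $\mu/(\delta d)$, for essentially all $U$. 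This is precisely the "hiding" issue: Aaronson and Arkhipov~\cite{aaronson2011computational} resolve it using exact Haar invariance, and Ref.~\cite{bremner2016average} resolves it by XOR-ing the target string into a random IQP circuit, but no such mechanism follows from the bare approximate-design property, so your Markov-over-$x$ route requires an additional hiding assumption that you have not established and that the paper's own argument does not invoke.
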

\begin{proof}
    Define
    \begin{align}
    Q_U := \abs{\bra{0}U\ket{0}}^2, \quad T_U := \text{Pr}[\text{$C$ outputs $0$ on input $U$}]. \notag
    \end{align}
    For any $U$, we can use Stockmeyer's Counting Theorem \mbox{(Theorem~\ref{theorem:StockmeyerCountingTheorem})} to obtain a relative-error approximation to $T_U$ in $\textsc{FBPP}^{\textsc{NP}^C}$,
    \begin{align}
        \abs{T_U-T_U^\prime} \leq \frac{T_U}{\text{poly}(n)}. \notag
    \end{align}
    Then,
    \begin{align}
        \abs{Q_U-T_U\prime} \leq& \abs{Q_U-T_U}+\abs{T_U-T_U^\prime} \notag \\
        \leq& \abs{Q_U-T_U}+\frac{T_U}{\text{poly}(n)} \notag \\
        \leq& \abs{Q_U-T_U}+\frac{(Q_U+|Q_U-T_U|)}{\text{poly}(n)} \notag \\
        =& \abs{Q_U-T_U}\left(1+\frac{1}{\text{poly}(n)}\right)+\frac{Q_U}{\text{poly}(n)}. \notag
    \end{align}
    As $C$ approximates $\mathcal{D}_U$ up to an $l_1$ error $\mu$, it follows from Markov's inequality and the approximate design condition \mbox{(Lemma~\ref{lemma:ApproximateDesignCondition})} that, for any $0<\delta<1$,
    \begin{align}
        \underset{U}{\textbf{Pr}}\left[\abs{Q_U-T_U} \geq \frac{\mu(1+\epsilon)}{\delta d}\right] \leq \delta. \notag
    \end{align}
    Therefore,
    \begin{align}
       \abs{Q_U-T_U^\prime} \leq \frac{\mu(1+\epsilon)}{\delta d}\left(1+\frac{1}{\text{poly}(n)}\right)+\frac{Q_U}{\text{poly}(n)}, \notag
    \end{align}
    with probability at least $1-\delta$ over the choice of $U$.
\end{proof}

The proof of \mbox{Lemma~\ref{lemma:ClassicalAdditiveApproximation}} requires a classic result of Stockmeyer~\cite{stockmeyer1985approximation}, which allows us to approximately count in the Polynomial Hierarchy.
\begin{theorem}[Stockmeyer's Counting Theorem~\cite{stockmeyer1985approximation}]
    \label{theorem:StockmeyerCountingTheorem}
    Let \mbox{$f:\{0,1\}^n\to\{0,1\}$} be a function, and let \mbox{$F = \sum_{x\in\{0,1\}^n}f(x)$}. Then there exists an $\emph{\textsc{FBPP}}^{\emph{\textsc{NP}}^f}$ algorithm, which outputs a value $\alpha$, such that
    \begin{align}
        \abs{\alpha-F}<\Omega\left[\frac{F}{\emph{poly}(n)}\right]. \notag
    \end{align}
\end{theorem}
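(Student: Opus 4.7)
The plan is to reduce approximate counting to a sequence of threshold decision problems that can be answered by the \textsc{NP} oracle, where the reduction itself is randomised using \textsc{BPP} coins. The classical technique that accomplishes this, going back to Sipser and Stockmeyer, is pairwise independent hashing: rather than counting $F$ directly, one counts how many witnesses survive a random hash down to a much smaller range, and uses the sample to estimate $F$.

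First, I would fix, for each $k = 0, 1, \ldots, n + O(\log n)$, an explicit family $H_k$ of pairwise independent hash functions $h : \{0,1\}^n \to \{0,1\}^k$ (for example, affine maps over $\mathbb{F}_2$). After sampling $h \in H_k$ uniformly at random using the \textsc{BPP} coins, I would use the \textsc{NP} oracle to ask: \emph{do there exist distinct $x_1, \ldots, x_T \in \{0,1\}^n$ such that $f(x_i) = 1$ and $h(x_i) = 0^k$ for every $i$?} This is an \textsc{NP} query because a polynomial-length certificate (the list of $x_i$) can be verified using $f$, which is available as an oracle.

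Next, invoking Chebyshev's inequality together with the pairwise independence of $H_k$, I would argue that the random variable $N_k = |\{x : f(x) = 1,\ h(x) = 0^k\}|$ is sharply concentrated around its mean $F / 2^k$ whenever this mean is sufficiently large. Binary searching over $k$ and thresholding at a carefully chosen value $T$, I would identify the unique $k^\star$ at which the oracle's answer switches from \emph{yes} to \emph{no}, which pins down the magnitude of $F$ up to a constant multiplicative factor.

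Finally, to boost the relative precision from a constant down to $1/\mathrm{poly}(n)$, I would rerun the scheme with $T = \mathrm{poly}(n)$, which forces much tighter Chebyshev concentration at the cost of only a polynomially larger \textsc{NP} certificate, and then amplify the \textsc{BPP} success probability by standard majority voting over independent hash choices. The main obstacle is the quantitative concentration analysis: the threshold $T$ must be large enough that the second-moment bound from pairwise independence yields one-sided error better than $1/\mathrm{poly}(n)$, yet small enough that the certificate stays polynomially bounded, and keeping both constraints in step throughout the binary search is the technical heart of Stockmeyer's argument.
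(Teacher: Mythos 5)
The paper does not prove this statement at all: it is quoted verbatim from Stockmeyer~\cite{stockmeyer1985approximation} as a known black-box result, so there is no in-paper argument to compare yours against. Your outline is the standard and essentially correct proof of approximate counting in $\textsc{FBPP}^{\textsc{NP}}$: pairwise-independent hashing, threshold queries to the $\textsc{NP}^f$ oracle with polynomial-size certificates verified via the $f$ oracle, Chebyshev concentration from the second-moment bound $\mathrm{Var}[N_k]\leq\mathbb{E}[N_k]=F/2^k$, and a search over $k$ to locate the scale of $F$. One small remark on the final precision-boosting step: Stockmeyer's original route to relative error $1/\mathrm{poly}(n)$ is the powering trick --- approximate $F^m=\sum_{x_1,\ldots,x_m}\prod_i f(x_i)$ to within a constant factor, which yields $F$ to within a factor $(1+O(1/m))$ --- whereas you instead enlarge the hash cell so that the expected survivor count is $\mathrm{poly}(n)$ and count survivors nearly exactly. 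Your variant is also sound (with $\mu\geq\mathrm{poly}(n)$ the Chebyshev bound $1/(\delta^2\mu)$ does give deviation $\delta=1/\mathrm{poly}(n)$ with constant failure probability, and median-of-independent-trials amplifies this), so either route completes the argument.
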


We now prove that unitary matrices distributed according to an \mbox{$\epsilon$-approximate} unitary \mbox{$(t\geq2)$-design} satisfy the anti-concentration bounds claimed in \mbox{Theorem~\ref{theorem:ClassicalSimulationRandomQuantumComputations}}. This was proven independently by Hangleiter et al.~\cite{hangleiter2017anti}.
\begin{lemma}
    \label{lemma:ApproximateDesignAnticoncentration}
    Let $U$ be a $d \times d$ unitary matrix distributed according to an \mbox{$\epsilon$-approximate} unitary \mbox{$(t\geq2)$-design}, then, for any unit vectors $\ket{\alpha}$, $\ket{\beta}$ and a constant \mbox{$0 \leq \gamma \leq 1-\epsilon$}, the following holds
    \begin{align}
        \underset{U}{\emph{\textbf{Pr}}}\left[\abs{\bra{\alpha}U\ket{\beta}}^2>\frac{\gamma}{d}\right] \geq \frac{(1-\epsilon-\gamma)^2}{2(1+\epsilon)}. \notag
    \end{align}
\end{lemma}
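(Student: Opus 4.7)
The plan is to apply the Paley--Zygmund second moment inequality to the nonnegative random variable $X := |\langle\alpha|U|\beta\rangle|^2$. The key observation is that $X$ is a polynomial in $\mathrm{Hom}_{(1,1)}(\mathrm{U}(d))$ and $X^2$ is a polynomial in $\mathrm{Hom}_{(2,2)}(\mathrm{U}(d))$, so the approximate $2$-design property pins down both moments of $X$ up to multiplicative factors of $1\pm\epsilon$ relative to their Haar values.

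First I would invoke the standard Haar moment identities for a uniformly random isometry applied to a fixed unit vector, namely
\begin{align}
    \int\limits_{\mathclap{\mathrm{U}(d)}} |\langle\alpha|U|\beta\rangle|^2\, dU = \frac{1}{d},
    \qquad
    \int\limits_{\mathclap{\mathrm{U}(d)}} |\langle\alpha|U|\beta\rangle|^4\, dU = \frac{2}{d(d+1)}. \notag
\end{align}
Applying the approximate $(t\geq2)$-design condition then yields $\mathbb{E}[X] \geq (1-\epsilon)/d$ and $\mathbb{E}[X^2] \leq 2(1+\epsilon)/(d(d+1))$, which are the only distributional facts the argument needs.

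Next I would apply Paley--Zygmund with threshold parameter $\theta := \gamma/(d\,\mathbb{E}[X])$, giving
\begin{align}
    \underset{U}{\textbf{Pr}}\left[X > \gamma/d\right] = \underset{U}{\textbf{Pr}}\bigl[X > \theta\,\mathbb{E}[X]\bigr] \geq (1-\theta)^2\,\frac{\mathbb{E}[X]^2}{\mathbb{E}[X^2]}. \notag
\end{align}
The hypothesis $\gamma \leq 1-\epsilon$ ensures $\theta \leq \gamma/(1-\epsilon) \leq 1$, so Paley--Zygmund applies and yields $1-\theta \geq (1-\epsilon-\gamma)/(1-\epsilon)$.

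Finally I would substitute the moment bounds: the ratio $\mathbb{E}[X]^2/\mathbb{E}[X^2]$ is at least $\tfrac{(1-\epsilon)^2(d+1)}{2d(1+\epsilon)} \geq \tfrac{(1-\epsilon)^2}{2(1+\epsilon)}$, and combining with $(1-\theta)^2 \geq (1-\epsilon-\gamma)^2/(1-\epsilon)^2$ the $(1-\epsilon)^2$ factors cancel to give exactly $\tfrac{(1-\epsilon-\gamma)^2}{2(1+\epsilon)}$. There is no real obstacle here; the only subtle point is verifying that $\theta \leq 1$ under the stated range of $\gamma$ so that Paley--Zygmund is legitimately applicable, and that the $(d+1)/d$ factor goes the right way so the design slack contributes only $1+\epsilon$ in the denominator rather than something worse.
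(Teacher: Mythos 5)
Your proof is correct and follows essentially the same route as the paper: Paley--Zygmund applied to $Z=\abs{\bra{\alpha}U\ket{\beta}}^2$, with the first and second moments controlled via the approximate design condition (your explicit Haar values $1/d$ and $2/(d(d+1))$ are exactly the reciprocals of the binomial coefficients in the paper's Lemma~\ref{lemma:ApproximateDesignCondition}). The cancellation of the $(1-\epsilon)^2$ factors and the discarding of the favourable $(d+1)/d$ factor match the paper's computation step for step.
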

\begin{proof}
    The Paley-Zygmund inequality \mbox{(Lemma~\ref{lemma:PaleyZygmundInequality})} tells us that
    \begin{align}
        \underset{Z}{\textbf{Pr}}\left[Z>\frac{\gamma}{d}\right] \geq \left(1-\frac{\gamma}{d\mathbb{E}[Z]}\right)^2\frac{\mathbb{E}[Z]^2}{\mathbb{E}\left[Z^2\right]}, \notag
    \end{align}
    for any \mbox{$0 \leq \gamma \leq d\mathbb{E}[Z]$}. Setting \mbox{$Z=\abs{\bra{\alpha}U\ket{\beta}}^2$}, it follows from the approximate design condition \mbox{(Lemma~\ref{lemma:ApproximateDesignCondition})}, that
    \begin{align}
        \underset{U}{\textbf{Pr}}\left[Z>\frac{\gamma}{d}\right] \geq& \frac{1}{2}\left(1-\frac{\gamma}{(1-\epsilon)}\right)^2\frac{(1-\epsilon)^2}{(1+\epsilon)}\frac{(d+1)}{d} \notag \\
        \geq& \frac{1}{2}\left(1-\frac{\gamma}{1-\epsilon}\right)^2\frac{(1-\epsilon)^2}{1+\epsilon} \notag \\
        =& \frac{(1-\epsilon-\gamma)^2}{2(1+\epsilon)}, \notag
    \end{align}
    for any \mbox{$0 \leq \gamma \leq 1-\epsilon$}.
\end{proof}

The proof of \mbox{Lemma~\ref{lemma:ApproximateDesignAnticoncentration}} combines the Paley-Zygmund inequality and the approximate design condition. The Paley-Zygmund inequality bounds the probability that a non-negative random variable is small in terms of its first and second moment.
\begin{lemma}[Paley-Zygmund inequality]
    \label{lemma:PaleyZygmundInequality}
    If $Z\geq0$ is a random variable with finite variance, and if \mbox{$0 \leq \theta \leq 1$}, then
    \begin{align}
       \underset{Z}{\emph{\textbf{Pr}}}[Z>\theta\mathbb{E}[Z]] \geq (1-\theta)^2\frac{\mathbb{E}[Z]^2}{\mathbb{E}[Z^2]}. \notag
    \end{align}
\end{lemma}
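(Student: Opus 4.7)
The plan is to follow the standard textbook argument that combines a simple truncation of the expectation with the Cauchy--Schwarz inequality. The underlying idea is that because $Z$ is non-negative, the contribution to $\mathbb{E}[Z]$ from the event $\{Z \leq \theta\mathbb{E}[Z]\}$ is small, forcing the complementary tail to carry a definite amount of mass, which Cauchy--Schwarz then converts into a lower bound on the probability of that tail.

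Concretely, I would first decompose
\begin{align}
\mathbb{E}[Z] = \mathbb{E}[Z\,\mathbf{1}_{Z \leq \theta\mathbb{E}[Z]}] + \mathbb{E}[Z\,\mathbf{1}_{Z > \theta\mathbb{E}[Z]}]. \notag
\end{align}
Using non-negativity of $Z$ and the pointwise bound $Z\,\mathbf{1}_{Z \leq \theta\mathbb{E}[Z]} \leq \theta\mathbb{E}[Z]$, the first term is at most $\theta\mathbb{E}[Z]$, so rearranging gives
\begin{align}
\mathbb{E}[Z\,\mathbf{1}_{Z > \theta\mathbb{E}[Z]}] \geq (1-\theta)\mathbb{E}[Z]. \notag
\end{align}

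Next, I would apply the Cauchy--Schwarz inequality to the same tail expectation, writing it as $\mathbb{E}[Z \cdot \mathbf{1}_{Z > \theta\mathbb{E}[Z]}]$ and treating $\mathbf{1}_{Z > \theta\mathbb{E}[Z]}$ as the second factor. Since $\mathbf{1}_A^2 = \mathbf{1}_A$, this yields
\begin{align}
\mathbb{E}[Z\,\mathbf{1}_{Z > \theta\mathbb{E}[Z]}] \leq \sqrt{\mathbb{E}[Z^2]}\sqrt{\textbf{Pr}[Z > \theta\mathbb{E}[Z]]}. \notag
\end{align}
Chaining the two inequalities, squaring, and dividing by $\mathbb{E}[Z^2]$ (which is finite by hypothesis) gives the claim. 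The edge cases $\mathbb{E}[Z]=0$ or $\mathbb{E}[Z^2]=0$ are handled trivially since then $Z=0$ almost surely and both sides of the desired inequality can be interpreted consistently (with the right-hand side taken as $0$ when $\mathbb{E}[Z^2]=0$).

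There is no real obstacle here; the statement is a classical inequality and the proof is a two-line manipulation. The only thing to be careful about is the direction of Cauchy--Schwarz and the bookkeeping of the factor $(1-\theta)^2$ after squaring, but these are routine.
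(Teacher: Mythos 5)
Your proof is correct and complete: the truncation bound $\mathbb{E}[Z\,\mathbf{1}_{Z \leq \theta\mathbb{E}[Z]}] \leq \theta\mathbb{E}[Z]$ followed by Cauchy--Schwarz on the tail term is the standard derivation of the Paley--Zygmund inequality, and your handling of the degenerate case $\mathbb{E}[Z^2]=0$ and of the sign condition needed to square the inequality (which uses $\theta \leq 1$ and $Z \geq 0$) is sound. The paper itself states this lemma without proof, citing it as a classical inequality, so there is no alternative argument to compare against; your write-up supplies exactly the routine two-step argument the authors are implicitly relying on.
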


We are interested in bounding the probability that the random variable \mbox{$Z=\abs{\bra{\alpha}U\ket{\beta}}^2$} is small. In the case of an exact unitary \mbox{$(t\geq2)$-design} the first and second moments of $Z$ match those of the Haar measure. For an \mbox{$\epsilon$-approximate} \mbox{$(t\geq2)$-design} the approximate design condition bounds the distance of the first and second moments of $Z$ from those of the Haar measure.
\begin{lemma}[Approximate design condition~\cite{brandao2013exponential}]
    \label{lemma:ApproximateDesignCondition}
    If $U$ is a $d \times d$ unitary matrix distributed according to an \mbox{$\epsilon$-approximate} unitary \mbox{$t$-design}, then, for any unit vectors $\ket{\alpha}$, $\ket{\beta}$ and an integer $k \leq t$,
    \begin{align}
        \frac{(1-\epsilon)}{\binom{k+d-1}{d-1}} \leq \mathbb{E}\left[\abs{\bra{\alpha}U\ket{\beta}}^{2k}\right] \leq \frac{(1+\epsilon)}{\binom{k+d-1}{d-1}}. \notag
    \end{align}
\end{lemma}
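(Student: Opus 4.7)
The plan is to reduce the claim to a standard Haar moment computation and then invoke the defining property of the approximate design. Define $f(U) := \abs{\bra{\alpha}U\ket{\beta}}^{2k} = (\bra{\alpha}U\ket{\beta})^{k}(\bra{\beta}U^\dagger\ket{\alpha})^{k}$, which is a non-negative polynomial in the entries of $U$ that is homogeneous of degree $k$ in the matrix elements of $U$ and of degree $k$ in those of $U^*$; hence $f\in\mathrm{Hom}_{(k,k)}(\textrm{U}(d))$. Since $k\leq t$ and $\sum_j U_{ij}U^*_{ij}\equiv 1$ on $\textrm{U}(d)$, one can multiply $f$ by $\bigl(\sum_j U_{ij}U^*_{ij}\bigr)^{t-k}$ to promote it to an element of $\mathrm{Hom}_{(t,t)}(\textrm{U}(d))$ that agrees with $f$ on the unitary group, so both the Haar integral and the expectation under $\mathcal{D}$ are unchanged. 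Applying the $\epsilon$-approximate $t$-design definition then yields
\begin{align}
(1-\epsilon)\int_{\textrm{U}(d)} f(U)\,dU \leq \mathbb{E}[f(U)] \leq (1+\epsilon)\int_{\textrm{U}(d)} f(U)\,dU. \notag
\end{align}

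It remains to evaluate the Haar integral. By bi-invariance of the Haar measure on $\textrm{U}(d)$, the random vector $U\ket{\beta}$ is uniformly distributed on the unit sphere of $\mathbb{C}^d$, independent of the choice of $\ket{\beta}$. Consequently $\abs{\bra{\alpha}U\ket{\beta}}^2$ has the same distribution as $\abs{\psi_1}^2$ for a Haar-random pure state $\ket{\psi}\in\mathbb{C}^d$, and $\abs{\psi_1}^2$ is $\mathrm{Beta}(1,d-1)$-distributed, with $k$-th moment
\begin{align}
\mathbb{E}\bigl[\abs{\psi_1}^{2k}\bigr] = \frac{k!\,(d-1)!}{(k+d-1)!} = \binom{k+d-1}{d-1}^{-1}. \notag
\end{align}
This identity can be verified either by direct integration over the complex sphere in polar coordinates, or equivalently via the Weingarten formula applied to $\int_{\textrm{U}(d)} U_{11}^{k}(U^*_{11})^{k}\,dU$. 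Substituting into the previous display delivers both halves of the claimed inequality.

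The main obstacle is really only the Haar moment calculation, since everything else is direct substitution into the approximate design condition. The one subtle technical point is the promotion from $\mathrm{Hom}_{(k,k)}$ to $\mathrm{Hom}_{(t,t)}$; if one prefers, this can be bypassed by appealing to the equivalent standard formulation of $\epsilon$-approximate $t$-designs in terms of closeness of the $t$-th moment operators of $\mathcal{D}$ and the Haar measure, from which the bounds on $\mathbb{E}\bigl[\abs{\bra{\alpha}U\ket{\beta}}^{2k}\bigr]$ for all $k\leq t$ follow simultaneously by evaluating both operators against $(\ket{\beta}\bra{\beta})^{\otimes k}$ and taking matrix elements with $(\ket{\alpha}\bra{\alpha})^{\otimes k}$.
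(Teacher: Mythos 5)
Your proof is correct. Note that the paper itself does not prove this lemma at all --- it is imported verbatim from Ref.~\cite{brandao2013exponential} --- so your argument is a self-contained derivation of a result the paper only cites, and it matches the standard proof. The two steps both check out: the promotion of $f(U)=\abs{\bra{\alpha}U\ket{\beta}}^{2k}\in\mathrm{Hom}_{(k,k)}(\textrm{U}(d))$ to $\mathrm{Hom}_{(t,t)}(\textrm{U}(d))$ by multiplying with $\bigl(\sum_j U_{ij}U^*_{ij}\bigr)^{t-k}$ is legitimate because this factor is identically $1$ on the unitary group, so neither the Haar integral nor the design expectation changes, and the paper's polynomial definition of an $\epsilon$-approximate $t$-design then applies directly; and the Haar moment $\mathbb{E}\bigl[\abs{\psi_1}^{2k}\bigr]=k!\,(d-1)!/(k+d-1)!=\binom{k+d-1}{d-1}^{-1}$ for a uniformly random unit vector is the correct $\mathrm{Beta}(1,d-1)$ computation. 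One small caution concerns only your closing remark: the moment-operator formulation of approximate designs is not exactly equivalent to the polynomial definition used here (translating between multiplicative-error variants can introduce dimension-dependent factors), so that shortcut would need care; but since your primary argument invokes the paper's definition verbatim, this does not affect the validity of the proof.
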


\bibliography{bibliography}

\end{document}